\newlength{\figurewidth}
\newlength{\smallfigurewidth}
\author{%
Magnus Berg$^{\ast}$\hspace{3mm}  Shahin Kamali$^{\dag}$ \hspace{3mm} Katherine Ling$^{\dag}$ \hspace{3mm}  Cooper Sigrist$^{\ddag}$\\[1em]
{\small
\begin{minipage}{\linewidth}\begin{center}
\begin{minipage}{.33\linewidth}
\begin{center}
$^{\ast}$University of Southern Denmark \\
Campusvej 55 \\
5230 Odense M, Denmark \\
\url{magbp@imada.sdu.dk} 
\end{center}
\end{minipage}
\begin{minipage}{.32\linewidth}
\begin{center}
$^{\dag}$York University \\
Toronto, Ontario \\
Canada, M3J1P3 \\
\url{kamalis@yorku.ca} \\
\url{kathling@my.yorku.ca}
\end{center}
\end{minipage}
\begin{minipage}{.32\linewidth}
\begin{center}
$^{\ddag}$University of Massachusetts Amherst\\
1 Campus Center Way \\
Amherst, MA, 01003, USA \\ \url{csigrist@umass.edu} 
\end{center}
\end{minipage}
\vspace{.5cm} \\ 
\end{center}
\end{minipage}}
}
\date{}
\algrenewcommand\algorithmicfunction{\textbf{query}}
\newtheorem{theorem}{Theorem}
\newtheorem{corollary}{Corollary}
\newtheorem{proposition}{Proposition}
\newtheorem{lemma}{Lemma}
\newenvironment{proof}{\begin{trivlist}\item[]{\it Proof }}%
{\hspace*{\fill}\raisebox{-1pt}{\boldmath$\Box$}\end{trivlist}}
\newcommand{\on}[1]{\ensuremath{\operatorname{#1}}}
\begin{document}
\input{figs}
\newcommand{\algoBelowSlicing}{\begin{center}
\scalebox{.9}{
\begin{minipage}{13.5cm}
\begin{algorithm}[H]
\caption{The \textbf{below} Query for General Polyominoes}\label{alg:slicing-below}
\begin{algorithmic}

\Function{$\bm{{\on{is\_bot}}}_P$}{$c$}
\If{$\bm{{\on{depth_{T^*}}}}(c) = i^*$}
    \If{$\bm{{\on{level\_rank}}}_{T^*}(c) \leq \on{FirstBot}$}
        \State \Return{\texttt{true}}
    \EndIf
\EndIf
\If{$\bm{{\on{depth_{T^*}}}}(c) = f(n)$}
    \State \Return{\texttt{true}}
\EndIf
\State \Return{\texttt{false}}
\EndFunction
\end{algorithmic}
\hrulefill
\begin{algorithmic}
\Function{$\bm{{\on{below}}}_P$}{$c$}
\If{$\bm{{\on{is\_bot}}}_P(c) = \texttt{false}$}
    \If{$\bm{{\on{parent}}}_{T^*}(c+1) = c$}
        \State \Return{$\bm{{\on{child}}}_{T^*}(c,1)$}
    \Else 
        \State \Return{\texttt{null}}
    \EndIf
\EndIf
\If{$\bm{{\on{depth}}}_{T^*}(c) = i^*$}
    \State $k \leftarrow \bm{{\on{level\_rank}}}_{T^*}(c) -1$
\EndIf
\If{$\bm{{\on{depth}}}_{T^*}(c) = f(n)$}
    \State $k \leftarrow \bm{{\on{level\_rank}}}_{T^*}(c) + \on{FirstBot} -1$
\EndIf
\If{$\on{Bot}[k] = 0$}
    \State \Return{\texttt{null}}
\EndIf
\State $r_{c'} \leftarrow \bm{{\on{select}}}_{\on{Top}}(k)$
\State \Return{$\bm{{\on{level\_select}}}_{T^*}(1,\on{FirstTop}+r_{c'}+1)$}
\EndFunction
\end{algorithmic}
\end{algorithm}
\end{minipage}}
\end{center}

}

\newcommand{\algoAboveSlicing}{\begin{center}
\scalebox{.9}{
\begin{minipage}{13.5cm}
\begin{algorithm}[H]
\caption{The \textbf{above} Query for General Polyominoes}\label{alg:slicing-above}
\begin{algorithmic}
\Function{$\bm{{\on{is\_top}}}_P$}{$c$}
\If{$\bm{{\on{depth_{T^*}}}}(c) = 1$}
    \If{$\bm{{\on{level\_rank}}}_{T^*}(c) > \on{FirstTop}$}
        \State \Return{\texttt{true}}
    \EndIf
\Else 
    \State \Return{\texttt{false}}
\EndIf
\EndFunction
\end{algorithmic}
\hrulefill 
\begin{algorithmic}
\Function{$\bm{{\on{above}}}_P$}{$c$}
\If{$\bm{{\on{is\_top}}}_P(c) = \texttt{false}$}
    \If{$\bm{{\on{child}}}_{T^*}(\bm{{\on{parent}}}_{T^*}(c),1) = c$}
        \State \Return{$\bm{{\on{parent}}}_{T^*}(c)$}
    \Else 
        \State \Return{\texttt{null}}
    \EndIf
\EndIf
\State $k \leftarrow \bm{{\on{level\_rank}_{T^*}}}(c) - \on{FirstTop} -1$
\If{$\on{Top}[k] = 0$}
    \State \Return{\texttt{null}}
\EndIf
\State $r_{c'} \leftarrow \bm{{\on{select_{Bot}}}}(k)$
\If{$r_{c'} < \on{FirstBot}$}
    \State \Return{$\bm{{\on{level\_select}}}_{T^*}(i^*, r_{c'} + 1)$}
\Else 
    \State \Return{$\bm{{\on{level\_select}}}_{T^*}(f(i), r_{c'} - \on{FirstBot} + 1)$}
\EndIf
\EndFunction
\end{algorithmic} 
\end{algorithm}
\end{minipage}}
\end{center}}

\newcommand{\algoVisSlicing}{
\begin{center}
\scalebox{0.9}{
\begin{minipage}{13.5cm}
\begin{algorithm}[H]
\caption{The \textbf{is\_visible} Query for general polyominos}
\label{alg:is_vis_slicing}
\begin{algorithmic}
\Function{$\bm{{\on{is\_visible}}}_{P}$}{$c_1,c_2$}
\If{$\bm{{\on{is\_visible}}}_{p_d}(c_1,c_2) = \texttt{true}$}
    \State \Return{\texttt{true}}
\EndIf
\State{$c' \leftarrow c_2$}
\While{$c' \geq c_1$}
\If{$c'-c_1 = \bm{{\on{depth}}}_{T^*}(c') - \bm{{\on{depth}}}_{T^*}(c_1)$} \Comment{Step (i)}
    \State \Return{\texttt{true}}
\Else
    \State $anc \leftarrow \bm{{\on{ancestor}}}_{T^*}(c',1)$ 
    \If{$c'-anc \neq  \bm{{\on{depth}}}_{T^*}(c') - \bm{{\on{depth}}}_{T^*}(anc)$} \Comment{Step (ii)}
        \State \Return{\texttt{false}}
    \Else \Comment{Step (iii)}
        \State $c' \leftarrow \bm{{\on{above}}}_{P}(anc)$
        \If{$c' = \texttt{null}$} 
            \State \Return{\texttt{false}}
        \EndIf
    \EndIf
\EndIf
\EndWhile
\State \Return{\texttt{false}}
\EndFunction
\end{algorithmic}
\end{algorithm}
\end{minipage}
}
\end{center}
}

\newcommand{\algoNeighbourhoodBar}{
\begin{center}
\scalebox{.9}{
\begin{minipage}{12cm}
\begin{algorithm}[H]
\caption{Neighbourhood Queries for Bar Graphs}
\begin{algorithmic}
\Function{$\bm{{\on{above}}}_G$}{$x$}
    \If{$S_G[x+1] = 1$}
        \State \Return{\texttt{null}}
    \Else 
        \State \Return{$x+1$}
    \EndIf
\EndFunction
\end{algorithmic}
\hrulefill 
\begin{algorithmic}
\Function{$\bm{{\on{below}}}_G$}{$x$}
    \If{$S_G[x] = 1$}
        \State \Return{\texttt{null}}
    \Else 
        \State \Return{$x-1$}
    \EndIf
\EndFunction
\end{algorithmic}
\hrulefill 
\begin{algorithmic}
\Function{$\bm{{\on{left}}}_G$}{$x$}
    \If{$\bm{{\on{bar}}}_G(x) = 0$}
        \State \Return{\texttt{null}}
    \EndIf
    \State $c_\ell \leftarrow \bm{{\on{select}}}_{S_G}(\bm{{\on{bar}}}_G(x)-1)$
    \If{$c_\ell + \bm{{\on{height}}}_G(x) < \bm{{\on{first\_in\_bar}}}_G(x)$}
        \State \Return{$c_\ell+\bm{{\on{height}}}_G(x)$}
    \Else 
        \State \Return{$\texttt{null}$}
    \EndIf
\EndFunction
\end{algorithmic}
\hrulefill 
\begin{algorithmic}
\Function{$\bm{{\on{right}}}_G$}{$x$}
\State $c_r \leftarrow \bm{{\on{select}}}_{S_G}(\bm{{\on{bar}}}_G(x)+1)$
\If{$c_r = \texttt{null}$}
    \State \Return{\texttt{null}}
\EndIf
\If{$c_r + \bm{{\on{height}}}_G(x) \leq \bm{{\on{size}}}_G(\bm{{\on{bar}}}_G(x)+1)$}
    \State \Return{$c_r + \bm{{\on{height}}}_G(x)$}
\Else 
    \State \Return{\texttt{null}}
\EndIf
\EndFunction
\end{algorithmic}
\end{algorithm}
\end{minipage}
}
\end{center}
}

\newcommand{\algoVisBar}{
\scalebox{.9}{
\begin{minipage}{18cm}
\begin{algorithm}[H]
\caption{Visibility Query for Bar Graphs}
\label{alg:is_vis_bar}
\begin{algorithmic}
\Function{$\bm{{\on{is\_visible\_block}}}_{G}$}{$x_1,x_2$}
    \State $q \leftarrow \bm{{\on{select}}}_{B_G}(i)$
    \State $entry \leftarrow S_G[q, q + k]$
    \State $y \leftarrow x_2$
    \If{$q + k \geq x_2$} \Comment{$x_2$ is a leftover cell}
        \State $y \leftarrow \bm{{\on{left}}}_G(x_2)$
    \EndIf
    \If{$y = \texttt{null}$}
        \State \Return{$\texttt{false}$}
    \Else 
        \State \Return{$R_k[entry].\bm{{\on{vis}}}(x-q,y-q)$}
    \EndIf
\EndFunction
\end{algorithmic}
\hrulefill 
\begin{algorithmic}
\Function{$\bm{{\on{is\_visible}}}_{G}$}{$x_1,x_2$}
\If{$\bm{{\on{bar}}}_G(x_1) = \bm{{\on{bar}}}_G(x_2)$} \Comment{$x_1$ and $x_2$ belongs to the same bar}
\State \Return{\texttt{true}}
\EndIf
\If{$\bm{{\on{height}}}_G(x_1) \neq \bm{{\on{height}}}_G(x_2)$} \Comment{$x_1$ and $x_2$ have different height}
\State \Return{\texttt{false}}
\EndIf
\If{$S_G[x_1] = S_G[x_2] = 1$} \Comment{$x_1$ and $x_2$ are the bottommost cells in the bar}
\State \Return{\texttt{true}}
\EndIf
\State $i \leftarrow \bm{{\on{block}}}_G(x_1)$
    \If{$\bm{{\on{block}}}_G(x_2) = i$} \Comment{$x_1$ and $x_2$ belong to the same block}
        \State \Return{$\bm{{\on{is\_visible\_block}}}_{G}(x_1, x_2)$}
    \Else \Comment{$x_1$ and $x_2$ belong to different blocks}
        \State $j \leftarrow \bm{{\on{block}}}_G(x_2)$
        \State $h \leftarrow \bm{{\on{height}}}_G(x_1)$
        \State $c_{last_i} \leftarrow \bm{{\on{select}}}_{B_G}(i+1)-1$
        \State $w \leftarrow \texttt{null}$
        \If{$\bm{{\on{height}}}_G(c_{last_i}) \geq h$}
            \State $w \leftarrow \bm{{\on{first\_in\_bar}}}_G(c_{last_i}) + h -1$
        \EndIf
        \State{$c_{first_j} \leftarrow \bm{{\on{select}}}_{B_G}(j)$}
        \State $z \leftarrow \texttt{null}$
        \If{$\bm{{\on{height}}}_G(c_{first_j}) \geq h$}
            \State $z \leftarrow \bm{{\on{first\_in\_bar}}}_G(c_{first_j}) + h - 1$
        \EndIf
        \If{($w = \texttt{null}) \textbf{ or } (z = \texttt{null})$}
            \State \Return{\texttt{false}}
        \EndIf
        \If{$(\bm{{\on{is\_visible\_block}}}_{G}(x_1, w) = \texttt{false}) \textbf{ or } (\bm{{\on{is\_visible\_block}}}_{G}(z, x_2) = \texttt{false})$}
            \State \Return{\texttt{false}}
        \EndIf
        \State $\ell \leftarrow \bm{{\on{LCA}}}_{C_G}(i, j)$
        \State $h_1 \leftarrow R_k[B_\ell].height$ 
        \State $c_{last_\ell} \leftarrow \bm{{\on{select}}}_{b_G}(\ell+1)-1$
        \State $h_2 \leftarrow \bm{{\on{height}}}(c_{last_\ell})$ 
        \State $lo \leftarrow \on{min}\{h_1, h_2\}$ \Comment{height of smallest bar in block $B_\ell$}
        \If{$lo \geq h$}
            \State \Return{\texttt{true}}
        \Else 
            \State \Return{\texttt{false}}
        \EndIf
    \EndIf
\EndFunction
\end{algorithmic}
\end{algorithm}
\end{minipage}
}
}

\pagenumbering{arabic}
\title
{\Large
\textbf{Space-Efficient Data Structures for \\ Polyominoes and Bar Graphs \thanks{A one-page summary of a preliminary result in this paper was announced  in~\cite{DCC21}.}}
}

\maketitle

\begin{abstract}
We provide a compact data structure for representing polyominoes that supports neighborhood and visibility queries. Neighborhood queries concern reporting adjacent cells to a given cell, and visibility queries determine whether a straight line can be drawn within the polyomino that connects two specified cells.
For an arbitrary small $\epsilon >0$, our data structure can encode a polyomino with $n$ cells in $(3+\epsilon)n + o(n)$ bits while supporting all queries in constant time. The space complexity can be improved to $3n+o(n)$, while supporting neighborhood queries in $\mathcal{O}(1)$ and visibility queries in $\mathcal{O}(t(n))$ for any arbitrary $t(n) \in \omega(1)$. 
Previous attempts at enumerating polyominoes have indicated that at least $2.00091n - o(n)$ bits are required to differentiate between distinct polyominoes, which shows our data structure is compact.

In addition, we introduce a succinct data structure tailored for bar graphs, a specific subclass of polyominoes resembling histograms. We demonstrate that a bar graph comprising $n$ cells can be encoded using only $n + o(n)$ bits, enabling constant-time query processing. Meanwhile, $n-1$ bits are necessary to represent any bar graph, proving our data structure is succinct.
\end{abstract}

\section{Introduction}

A \emph{polyomino} is a mathematical construct consisting of unit \emph{cells} that adhere to one another along their edges, with their corners aligning.
We do not impose any constraints on the shape of polyominoes. In particular, polyominoes may be disconnected or encompass \emph{holes}, i.e.\ the complement of a polyomino can be disjoint
(see Figure~\ref{fig:polyomino_intro}).

Polyominoes find practical applications in modelling planar environments with obstacles~\cite{GarciaYKRB22,Anh2018}. 
In this context, each cell of a polyomino represents a navigable location, while missing cells denote walls or obstacles. When managing a fleet of one or more robots operating within such an environment, it is often essential to perform fundamental tasks such as finding the shortest path from a robot to a target cell, constructing a minimum spanning tree encompassing all robots, and determining whether a robot can \emph{see} a specific point within the polyomino, as applied to the art gallery problem~\cite{GarciaYKRB22,AbrahamsenAM22}. To accomplish these tasks, basic algorithms, like Dijkstra's shortest path algorithm, rely on \emph{navigation}
queries performed on the \emph{dual graph} of the polyomino. This dual graph is an unweighted graph that assigns a vertex to each cell in the polyomino, with edges connecting adjacent cells.
Navigation queries encompass:
\begin{itemize}
    \item \emph{neighborhood} queries to identify adjacent cells to a given cell $c$ in the left, right, bottom, and top directions, if they exist.
    \item \emph{adjacency} queries to determine if two cells, $c$ and $c'$, are connected.
    \item \emph{degree} queries to report the number of adjacent cells to a specific cell $c$.
\end{itemize}
Note that an efficient solution for answering neighborhood queries can also handle adjacency and degree queries. 
This follows as each cell in a polyomino has a degree at most 4.

\begin{figure}
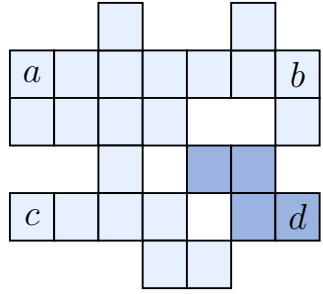

\centering
\scalebox{.95}{\shahinFigOneSingleColour} 
\caption{A polyomino with $n=25$ cells: Cell $a$ is visible to $b$ but not to $c$ or $d$. Note that the dark cells form a disconnected component of the polyomino.
\vspace*{-0.2cm}}
\label{fig:polyomino_intro}
\end{figure}

\subsection{Contribution}
We present space-efficient data structures for polyominoes that support neighborhood (and thus other navigation) queries and visibility queries.
Basic dual graph representations like the adjacency matrix and adjacency list are suboptimal due to their lack of compactness and/or inefficiency in query response.
A data structure is said to be \emph{compact} if it takes $\mathcal{O}(B)$ to encode a member
from a family of objects, e.g., the family of all polyominoes of size $n$, while $\Omega(B)$ bits are necessary for any such encoding.
Similarly, a structure is considered \emph{succinct} if it requires $B +o(B)$ bits to encode a member of the family, with $B - o(B)$ bits being necessary.

Our contributions can be summarized as follows:

\begin{itemize}
\item In a special case where the polyomino can fit within a strip of width or height $o(n)$, our data structure stores a polyomino of size $n$ using $3n+o(n)$ bits while supporting queries in $\mathcal{O}(1)$ (Theorem~\ref{th:mainBounded}).
\item In the general setting, for any $\epsilon >0$, our data structure stores a polyomino of size $n$ using $(3+\epsilon)n + o(n)$ bits while supporting queries in constant time (Theorem~\ref{th:mainGeneral} and Corollary~\ref{corolOne}).
Moreover, we can improve the space complexity to $3n+o(n)$ while supporting queries in $\mathcal{O}(t(n))$ time for any arbitrarily small $t(n) \in \omega(1)$ (Theorem~\ref{th:mainGeneral} and Corollary~\ref{corolTwo}).
. Prior enumeration efforts have established that distinguishing between polyominoes requires a minimum of $2.00091n-o(n)$ bits 
(likely $2.021 n -o(n)$)~\cite{JG00,BarequetRS16}. 
This corroborates the compactness of our data structures.

\item Additionally, we investigate a class of polyominoes known as \emph{bar graphs}~\cite{BLPS96,Fer96}, which are a fundamental subset of polyominoes that resemble histograms and require $n-1$ bits to encode. We introduce a succinct data structure for bar graphs that require $n+o(n)$ bits of storage while supporting constant-time neighborhood and visibility queries (Theorem~\ref{th:barGraph}).

\end{itemize}

Throughout the paper,
we assume word RAM model with a word size of $\Omega(\log n)$. 

\vspace*{1mm}
\section{Preliminaries}
\vspace*{-1mm}

Our data structure for polyominoes has two components: an ordinal tree and a bitstring. Succinct representations for both already exist, and we briefly review the main results related to these. 

Given a bitstring $S$, the query $\bm{{\on{rank}}}_S(i)$ reports the number of occurrences of $1$, 
up to index $i$, in $S$, and $\bm{{\on{select}}}_S(i)$ reports the index of the $i$'th occurrence of $1$ in $S$.
There are succinct data structures that store the sequence in $n + o(n)$ bits and supports rank/select in constant time \cite{Jacobson89,RamanRS07,BarbayGNN10} (see also~\cite{Munro96,NavarroBook}):

\begin{lemma}\label{thm:succinct_bit_string}\cite{BarbayGNN10}
Given a bitstring $S$ of length $n$ bits, it is possible to support queries $\bm{{\on{rank}}}_S(i)$ and $\bm{{\on{select}}}_S(i)$ in $\mathcal{O}(1)$ for any $i \leq n$, in $n+o(n)$ bits of space.
\end{lemma}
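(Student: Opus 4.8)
The statement to prove is Lemma~\ref{thm:succinct_bit_string}: a bitstring of length $n$ can be stored in $n + o(n)$ bits so that $\bm{{\on{rank}}}_S$ and $\bm{{\on{select}}}_S$ are answered in $\mathcal{O}(1)$ time in the word RAM model with word size $\Omega(\log n)$. Since this is a classical result (Jacobson, Clark, Munro, and refined by Raman--Raman--Rao and Barbay et al.), my plan is to give the standard two-level (or three-level) directory construction rather than anything novel; the work is in laying out the block decomposition carefully and checking the $o(n)$ bound on the auxiliary space.

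The plan is as follows. First I would store $S$ verbatim, costing $n$ bits; all auxiliary structures must then fit in $o(n)$ bits. For $\bm{{\on{rank}}}$, I would partition $S$ into \emph{superblocks} of length $s = \lceil \log^2 n \rceil$ and each superblock into \emph{blocks} of length $b = \lceil (\log n)/2 \rceil$. I would keep (i) a table $R_1$ storing, for each superblock, the number of $1$s in $S$ strictly before it — there are $O(n/\log^2 n)$ entries, each $O(\log n)$ bits, total $O(n/\log n) = o(n)$ bits; (ii) a table $R_2$ storing, for each block, the number of $1$s before it \emph{within its superblock} — there are $O(n/\log n)$ entries, each $O(\log\log n)$ bits since the count is at most $\log^2 n$, total $O(n\log\log n/\log n) = o(n)$ bits; and (iii) a universal lookup table $R_3$ indexed by a block bit-pattern (of which there are $2^b = O(\sqrt n)$) and an offset within the block ($O(\log n)$ choices), returning the popcount of that prefix — total $O(\sqrt n \log n \cdot \log\log n) = o(n)$ bits. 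Then $\bm{{\on{rank}}}_S(i)$ is computed as the sum of the superblock entry, the block entry, and the table lookup for the block containing $i$, i.e.\ $O(1)$ word-RAM operations, using the fact that a block fits in one machine word so it can be extracted and used to index $R_3$ in constant time.

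For $\bm{{\on{select}}}$, the symmetric trick does not quite work because the ``blocks'' of $1$s have variable length, so I would use the standard argument of partitioning the positions of $1$s rather than the positions of $S$. Group the $1$s into consecutive runs of $t = \lceil \log^2 n \rceil$ ones. For each group, record the position in $S$ of its first $1$ (a \emph{group boundary}); this is $O(m/\log^2 n)$ entries of $O(\log n)$ bits where $m \le n$ is the number of $1$s, hence $o(n)$. Call a group \emph{sparse} if it spans more than $\log^4 n$ positions of $S$ and \emph{dense} otherwise. For sparse groups, explicitly store the positions of all $t$ ones: the total number of $1$s in sparse groups is at most $m / (\log^4 n / \log^2 n) \cdot \log^2 n = O(n/\log^2 n)$ wait — more carefully, there are at most $n/\log^4 n$ sparse groups, each with $\log^2 n$ recorded positions of $O(\log n)$ bits, totalling $O(n/\log n) = o(n)$ bits. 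For dense groups, which span a window of $\le \log^4 n$ positions, I would recurse the same two-level block structure relative to the start of the window (block sizes $\Theta(\log n)$ again, with counts now bounded by $\log^4 n$ so entries take $O(\log\log n)$ bits), plus a universal table mapping a block pattern and a desired rank-within-block to the position of that $1$; again $o(n)$ overall. A $\bm{{\on{select}}}_S(i)$ query first finds the group $\lceil i/t\rceil$ via arithmetic, jumps to its boundary, and then either reads the stored answer (sparse) or does an $O(1)$ block-search within the dense window.

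The main obstacle — really the only place that needs care — is the bookkeeping that verifies \emph{all} auxiliary arrays sum to $o(n)$ simultaneously, and that each lookup genuinely decomposes into a constant number of word operations given word size $\Omega(\log n)$: one must check that a block of $\lceil(\log n)/2\rceil$ bits fits in a word (so it can serve as a table index), that the universal tables have $n^{o(1)}$ entries each of $o(\log n)$ bits, and that the select groups' dense/sparse threshold is chosen so neither case blows up. None of the individual estimates are hard, but they are the substance of the proof. I would present the rank construction in full detail, then describe select by analogy, citing \cite{Jacobson89,Clark96,RamanRS07,BarbayGNN10} for the original arguments and for the refinements that push the lower-order term down; since the lemma only claims $n + o(n)$ and $\mathcal{O}(1)$, the crude bounds above already suffice and I would not chase the optimal $o(n)$ term.

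\emph{Remark.} In fact, as the cited references establish, one can do better than storing $S$ plainly together with $o(n)$ extra bits: there are representations using $n + o(n)$ bits total in a single integrated structure, and compressed variants using $nH_0(S) + o(n)$ bits \cite{RamanRS07}; but for the purposes of this paper the plain-plus-directory construction is all that is needed, so I would state the stronger facts only in passing.
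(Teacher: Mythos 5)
The paper does not actually prove this lemma: it is imported wholesale from the cited reference (and ultimately goes back to Jacobson, Clark and Munro), so there is no in-paper argument to compare against. Your proposal reconstructs the standard proof, and the rank half is complete and correct: the three-level directory (superblocks of length $\log^2 n$, blocks of length $\tfrac12\log n$, and a universal popcount table over the $\sqrt{n}$ possible block patterns), with exactly the space estimates you give and an $\mathcal{O}(1)$ query consisting of one lookup per level. The sparse-group half of select is also correct as written.

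The gap is in the dense-group case of select, which as described does not yet achieve $\mathcal{O}(1)$. A dense group spans up to $\log^4 n$ positions, which does not fit in a constant number of words, so the window cannot be resolved by a single table lookup; and if you lay rank-style blocks of $\Theta(\log n)$ bits across the window with cumulative $1$-counts, answering select still requires locating which of the $\Theta(\log^3 n)$ blocks contains the $j$-th one --- with only the stored counts this is a predecessor search, costing $\Theta(\log\log n)$ time by binary search, not $\mathcal{O}(1)$. The alternative of storing the offset of every $1$ in every dense group explicitly costs $\Theta(n\log\log n)$ bits, which is not $o(n)$. The standard repair (Clark's) is to apply the sparse/dense grouping a second time to the \emph{ones inside} each dense group: subgroups of $\Theta\bigl((\log\log n)^2\bigr)$ ones, subgroup boundaries stored as $O(\log\log n)$-bit offsets relative to the group start, sparse subgroups stored explicitly, and dense subgroups --- which now span only $(\log\log n)^{O(1)}$ bits and hence fit in a word --- resolved by the universal lookup table. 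Your phrase ``recurse the same two-level block structure'' gestures at this, but the recursion must be on groups of ones rather than on positions of $S$, and one must verify it bottoms out after a constant number of levels. With that single correction the argument is the standard one and is complete.
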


\begin{lemma}\label{thm:succinct_bit_string_k}\cite{RamanRS07}
Given a bitstring $S$ of length $n$ bits with $k$ bit of value 1, it is possible to support queries $\bm{{\on{rank}}}_S(i)$ and $\bm{{\on{select}}}_S(i)$ in $\mathcal{O}(1)$ for any $i \leq n $, in $\log \binom{n}{k} + o(n)$ bits of space.
\end{lemma}


An \emph{ordinal tree} $T$ consists of a root and a (possibly empty) ordered set of children, which are roots of ordinal subtrees themselves. 
We reference each node in an ordinal tree by its index in the pre-order traversal of the tree.
Given an ordinal tree $T$, any vertices $u$ and $v $ of $T$, any $d \in\{0,1,\ldots,\on{depth}(T)\}$, and any $i \in \{1,\ldots,n\}$, consider the following queries: 

\begin{center}
\scalebox{.85}{
    \begin{tabular}{|l|l|}
    \hline 
        $\bm{{\on{parent}}}_T(v)$, $\bm{{\on{child}}}_T(v,i)$ &   the parent of $v$ and  the $i$'th child of $v$ in $T$, respectively \\ \hline  
        $\bm{{\on{depth}}}_T(v)$ & the depth of $v$ in $T$ (no.\ edges on the path from the root to $v$)  \\ \hline 
        $\bm{{\on{ancestor}}}_T(v,i)$ & the ancestor of $v$ in $T$ at level  $\bm{{\on{depth}}}_T(v)-i$ \\ \hline 
        $\bm{{\on{level\_pred}}}_T(v)$/$\bm{{\on{level\_succ}}}_T(v)$ & the predecessor/successor of $v$ in the level-order traversal of $T$   \\ \hline 
        $\bm{{\on{level\_order\_rank}}}_T(v)$ & the index of $v$ in the level-order traversal of $T$  \\ \hline 
        $\bm{{\on{level\_order\_select}}}_T(i)$ & the $i$'th node in the level-order traversal of $T$ \\ \hline 
        $\bm{{\on{level\_rank}}}_T(v)$ &  the number of nodes at the same level but before $v$  in $T$ \\ \hline $\bm{{\on{level\_select}}}_T(\ell,i)$ & the $i$'th node in level $\ell$ of $T$  \\ \hline
        $\bm{{\on{LCA}}}_T(v, u)$ & the least common ancestor of $v$ and $u$  \\ \hline
    \end{tabular}
} 
\end{center}

There are multiple succinct data structures (see~\cite{DelprattRR06,Raman013,HeMS12,NavarroS14}) for ordinal trees. We use the data structure from~\cite{He2020}, which supports translations between level-order traversals and pre-order traversals of the tree in constant time.

\begin{lemma}
\label{thm:succinct_ordinal_tree} ~\cite{He2020}	It is possible to store an ordinal tree of size $n$ in $2n + o(n)$ bits and support all queries listed above in $\mathcal{O}(1)$.
\end{lemma}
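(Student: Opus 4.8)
The plan is to build the representation in two layers: a tree-covering backbone that already yields the ``vertical'' navigation operations, and a set of $o(n)$-bit auxiliary summaries layered on top of it for the level-order family, together with the pre-order/level-order translation that the rest of the paper relies on. For the backbone, I would partition $T$ by the standard two-level tree-covering scheme: first into $\Theta(n/\log^2 n)$ \emph{mini-trees} of size $O(\log^2 n)$, then each mini-tree into \emph{micro-trees} of size roughly $\tfrac{1}{2}\log n$. Each micro-tree shape is replaced by a pointer into a precomputed table of all shapes of that size; since there are $o(2^{(\log n)/2})$ such shapes, the table and all pointers occupy $o(n)$ bits, and the ``macro'' skeleton recording how the $O(n/\log n)$ mini-/micro-trees attach to one another is stored explicitly in $o(n)$ bits as well. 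The balanced-parentheses (or DFUDS) encoding of $T$ itself costs $2n+o(n)$ bits, and the usual excess/rank/select machinery combined with the micro-tree tables already answers $\on{parent}$, $\on{child}$, $\on{depth}$, $\on{ancestor}$ (level-ancestor), and $\on{LCA}$ in $\mathcal{O}(1)$. This is well-trodden and accounts for everything except the six level-order queries.

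The second, harder layer handles level-pred/succ, level-order-rank/select, level-rank, level-select, and the implicit requirement that one can map a pre-order index to its level-order index and back in $\mathcal{O}(1)$. The idea is that, inside any single micro-tree, the multiset of nodes at a fixed depth offset, and their relative order in both pre-order and level-order, is a function of the micro-tree's shape alone; hence all of this can be read off augmented entries of the same precomputed table. Across micro-trees and mini-trees the two orders diverge, but only ``coarsely''. So I would hang on the macro skeleton, indexed by level $\ell$, succinct structures recording the prefix counts $\sum_{\ell'<\ell}(\text{number of nodes at level }\ell')$ and, for each level, how its nodes are distributed among the relevant mini-trees (which can be identified from their root depths and subtree height ranges). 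With these prefix sums plus the table lookups inside one micro-tree, level-rank (nodes at the same level but preceding a given node) and level-select ($i$-th node of a level) follow, the level-order rank of a node is its level's prefix count plus its level-rank, its inverse is a predecessor search over the level prefix counts, and level-pred/succ are then one-step corollaries.

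I expect the main obstacle to be precisely this order-translation step while staying inside the $o(n)$ redundancy budget. The naive fix---an explicit array mapping pre-order positions to level-order positions---costs $\Theta(n\log n)$ bits and is not allowed; and unlike depth, the level-order position of a node is not ``local'' to its micro-tree. The resolution I would pursue is the observation above: the discrepancy between the two orders is entirely local within a micro-tree (table-driven) and otherwise governed by the $O(n/\log n)$-node macro skeleton, on which even $\Theta(\log n)$ bits per node is only $o(n)$; the delicate accounting is to show that the per-level distribution information and the within-micro-tree offset tables together pin down the translation in a constant number of rank/select and table operations, with no per-node $\Omega(\log n)$-bit cost on the original tree. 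Once the translation is in place, every query in the list reduces to $\mathcal{O}(1)$ table lookups and rank/select calls on the backbone, giving the claimed $2n+o(n)$ bits and constant time.
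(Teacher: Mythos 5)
The paper does not prove this lemma at all: it is imported verbatim from \cite{He2020}, so there is no ``paper's approach'' to compare against. Judged on its own, your backbone layer is standard and sound: a balanced-parentheses or DFUDS encoding with two-level tree covering and micro-tree lookup tables gives $\on{parent}$, $\on{child}$, $\on{depth}$, $\on{ancestor}$, and $\on{LCA}$ in $\mathcal{O}(1)$ with $2n+o(n)$ bits, and you correctly isolate the real difficulty, namely supporting the level-order family and the pre-order/level-order translation within an $o(n)$ redundancy budget. That identification matches the actual content of the cited work, whose contribution is precisely constant-time BFS-rank/select on a $2n+o(n)$-bit ordinal tree.

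However, your resolution of that difficulty has a concrete accounting gap. You propose to store, ``for each level, how its nodes are distributed among the relevant mini-trees,'' plus the prefix counts $\sum_{\ell'<\ell}(\text{number of nodes at level }\ell')$. Neither fits in $o(n)$ bits as stated. A mini-tree of size $\Theta(\log^2 n)$ can span $\Theta(\log^2 n)$ consecutive levels (take a caterpillar or path-like tree), so the number of nonzero (level, mini-tree) incidences can be $\Theta(n)$, and recording even a constant number of bits per incidence already consumes $\Theta(n)$ bits of redundancy. Likewise, the tree can have $\Theta(n)$ levels, and any searchable partial-sum structure over the level-size sequence $\ell_1,\dots,\ell_h$ with $\sum_i \ell_i=n$ costs $\Theta(n)$ bits if encoded in the obvious unary or Elias--Fano fashion, again blowing the budget. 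Your key claim --- that the pre-order/level-order discrepancy is ``local within a micro-tree and otherwise governed by the macro skeleton'' --- is also not quite right: $\on{level\_rank}_T(v)$ counts nodes of depth $\bm{{\on{depth}}}_T(v)$ in a pre-order prefix \emph{globally}, i.e., it interleaves contributions from many micro-trees, so it is a prefix equality-counting query on the depth sequence rather than a macro-plus-one-micro-tree computation. The way out (and the actual technical core of \cite{He2020}) is to exploit that the depth sequence in pre-order is the excess sequence of the parentheses string, which changes by $\pm 1$ at each step; this $\pm 1$ structure is what allows counting occurrences of a fixed excess value in a prefix in $\mathcal{O}(1)$ time with $o(n)$ bits via range min-max-tree-style machinery, and it is exactly the ingredient your sketch is missing. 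Without it, the ``delicate accounting'' you defer is not a routine verification but the theorem itself.
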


\section{Compact Data Structure for Polyominoes}
Before presenting our main result for general polyominoes, we present a data structure for \emph{nice polyominoes}, which are polyominoes that can fit entirely within a strip of height $h(P) = o(n)$. 
The same result can be achieved for polyominoes fitting within a strip of width $o(n)$ by rotating them by 90 degrees.

Given a nice polyomino $P$, 
the \emph{augmented polyomino} of $P$, denoted $P^\ast$, is the polyomino of size $n + h(P) + 1$ obtained by appending a column of size $h(P)$ comprised of \emph{dummy nodes} to the left of $P$ so that the leftmost cell of $P^\ast$ in every row of the underlying grid is a dummy node. 
No cell in $P$ is adjacent to a dummy node.
We also add a dummy node on top of the topmost dummy node and call it the \emph{root} of $P^\ast$ (see Figure~\ref{fig:augmented_polyomino}). 
Define the \emph{level} of each cell as its vertical distance to the root. 
We refer to two vertically-connected cells as a \emph{domino} with the top and bottom cells being the \emph{head} and \emph{tail} of the domino, respectively. 

Our data structure is composed of the following components: 
\begin{itemize}
    \item[] \textbf{Covering Tree $\bm{{T^\ast}}$}: From the augmented polyomino $P^\ast$, we form an ordinal tree $T^*$ called the \emph{covering tree} of $P$ as follows. The root of $T^\ast$ is the root of $P^\ast$, and the parent of each cell $c$ at level $\ell \geqslant 1$ is the rightmost cell on level $\ell-1$ that appears in the same column or on the left of $c$ and is the head of a domino. The presence of dummy nodes ensures that such a parent always exists. We store $T^\ast$ in $2 (n + h(P)+1) + o(n) = 2n +o(n)$ bits of space using the succinct data structure of Lemma~\ref{thm:succinct_ordinal_tree}. 
\item[] \noindent\textbf{Left Bitstring $\bm{{L}}$}: The \emph{left bitstring} $L$ of $P$ is a bitstring of length $n + h(P) + 1$, which, for all cells $c \in P^\ast$ stores a bit indicating whether $c$ is adjacent to a cell on its left. The bits appear in the level-order traversal of $T^\ast$. The bitstring $L$ is stored using $n + o(n)$ bits of space, using the succinct data structure of Lemma~\ref{thm:succinct_bit_string}. 

\end{itemize}

In our data structure, cells are represented by their index in the pre-order traversal of the covering tree. 
The fact that the data structure from~\cite{He2020} supports translations between pre-order and level-order traversals in constant time implies that we can efficiently identify the bit in the left bitstring corresponding to any given cell.

\begin{figure}[t]
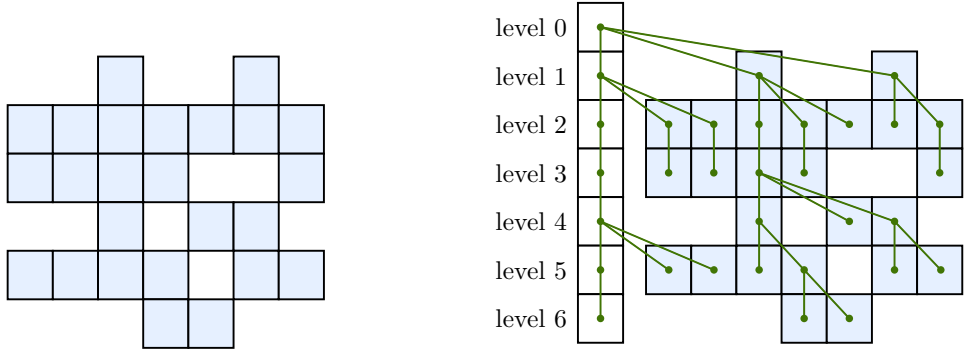

\centering
\scalebox{.97}{\ppAndAugmentedPpColoured} 
\caption{ (Left) A polyomino with $n=25$ cells. 
(Right) The covering tree of the augmented polyomino. The portion of the left bitstring $L$ associated with level $5$ 
is $L = 0011101$.
\vspace*{-0.2cm}}
\label{fig:augmented_polyomino} 
\end{figure} 

\begin{theorem}\label{th:mainBounded}
There exists a data structure that stores a polyomino $P$ that fits entirely within a strip of height or width $o(n)$, in $3n + o(n)$ bits of space while supporting navigation and visibility queries in constant time.
\end{theorem}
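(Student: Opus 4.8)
The plan is to show that all queries reduce to $\mathcal{O}(1)$-time operations on the two stored components, the covering tree $T^\ast$ (via Lemma~\ref{thm:succinct_ordinal_tree}) and the left bitstring $L$ (via Lemma~\ref{thm:succinct_bit_string}), and that the total space is $2n+o(n) + n+o(n) = 3n+o(n)$ since $h(P)=o(n)$. The space bound is immediate from the construction and the cited lemmas; the substance is implementing the navigation and visibility queries. Cells are named by their pre-order index in $T^\ast$, and the data structure of~\cite{He2020} converts between pre-order and level-order indices in constant time, so ``read the bit of $L$ for cell $c$'' means: compute $\bm{{\on{level\_order\_rank}}}_{T^*}(c)$, index into $L$, done in $\mathcal{O}(1)$.

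First I would handle the two vertical neighborhood queries, which are essentially built into $T^\ast$. By the definition of the covering tree, the parent of a cell $c$ at level $\ell$ is the rightmost head-of-domino on level $\ell-1$ weakly to the left of $c$'s column. The cell directly above $c$ in the grid (its $\bm{{\on{above}}}$ neighbor in $P$) exists precisely when $c$ is the tail of a domino, and in that case $c$ is the \emph{first} child of its parent and that parent is the cell directly above; conversely if $c$ is not the first child, or is a child of a dummy node, there is no cell above. Symmetrically, $c$ has a cell below it in $P$ iff $c$ is the head of a domino, which holds iff $c$ has at least one child that is a ``vertically aligned'' first child; one checks this by examining $\bm{{\on{child}}}_{T^*}(c,1)$ and testing whether its grid column equals that of $c$. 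Column indices themselves are recovered in $\mathcal{O}(1)$: the column of a cell equals the number of its proper ancestors together with left-adjacency information, or more directly one maintains that moving from a node to its first child that is vertically aligned preserves the column, and the level gives the row — the precise bookkeeping here is routine. So $\bm{{\on{above}}}$ and $\bm{{\on{below}}}$ are $\mathcal{O}(1)$ via $\bm{{\on{parent}}}_{T^*}$, $\bm{{\on{child}}}_{T^*}$, and $\bm{{\on{depth}}}_{T^*}$.

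Next, the horizontal neighborhood queries use the left bitstring. For $\bm{{\on{left}}}_P(c)$: if $L[c]=0$ there is no left neighbor (return \texttt{null}); if $L[c]=1$, the left neighbor lies on the same level as $c$, in the column immediately to the left, and I claim it is found by a constant number of tree steps — the left neighbor $c'$ of $c$ is the immediately preceding node of $c$ in the level-order traversal of $T^\ast$ restricted to the cells actually occupying consecutive columns, which can be realized as $\bm{{\on{level\_select}}}_{T^*}(\bm{{\on{depth}}}_{T^*}(c), \bm{{\on{level\_rank}}}_{T^*}(c))$ combined with a check that the columns are adjacent (using $L$ to detect gaps). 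For $\bm{{\on{right}}}_P(c)$: a cell $c'$ to the right of $c$ exists iff the cell at level $\bm{{\on{depth}}}_{T^*}(c)$ and column $\mathrm{col}(c)+1$ is present in $P$ and records a left-adjacency, i.e.\ iff $\bm{{\on{level\_succ}}}_{T^*}(c)$ is on the same level, occupies column $\mathrm{col}(c)+1$, and has $L$-bit $1$; all of these are $\mathcal{O}(1)$. Thus all four neighborhood queries, and hence adjacency and degree, run in constant time.

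Finally, visibility: two cells $c_1,c_2$ are visible iff they lie in the same row with every intervening cell present, or in the same column with every intervening cell present, or (if not row/column aligned) — but in a nice polyomino any two cells $c_1, c_2$ in the same row with all cells between them present can be detected in $\mathcal{O}(1)$ because the row is a contiguous run of consecutive pre-order-labeled cells on one level of $T^\ast$ exactly when no interior left-adjacency bit is $0$; one checks $\bm{{\on{depth}}}_{T^*}(c_1)=\bm{{\on{depth}}}_{T^*}(c_2)$ and that $\bm{{\on{level\_rank}}}$ differs by exactly the column distance with no broken $L$-bit in between, which is a single rank query on $L$ after converting to level-order positions. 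The column case is analogous using the first-child-vertical-alignment structure and $\bm{{\on{depth}}}_{T^*}$. (Diagonal visibility for non-aligned cells does not occur for axis-aligned line segments through grid cells in the standard model; if the paper's visibility notion is exactly "a straight horizontal or vertical segment", only the two aligned cases arise.) I expect the main obstacle to be the precise combinatorial verification that the covering tree's parent rule makes ``same row, all intermediate cells present'' equivalent to a clean condition on level-order ranks and on a contiguous block of $L$ — in particular ruling out that a row of $P$ could be split across non-consecutive positions in the level-order traversal of $T^\ast$. This requires carefully exploiting that dummy nodes begin every row and that the parent of each cell is chosen as the rightmost eligible cell on the level above, which forces the level-order within a level to respect left-to-right column order; once that lemma is in hand, the rank-query implementations follow mechanically.
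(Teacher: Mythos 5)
Your proposal follows essentially the same route as the paper: identical space accounting, vertical neighbors read off the parent/first-child structure of $T^\ast$, horizontal neighbors and same-row visibility read off $L$ via the level-order correspondence, and vertical visibility via the chain of first children. You also correctly isolate the one structural fact everything rests on, namely that the level-order traversal within each level of $T^\ast$ lists the cells of that row in left-to-right column order (forced by the dummy column and the ``rightmost eligible parent'' rule).

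The one weak spot is your repeated appeal to computing the grid column $\mathrm{col}(c)$ in $\mathcal{O}(1)$, which you defer as ``routine bookkeeping.'' It is not: neither $T^\ast$ nor $L$ stores horizontal offsets, and $\bm{{\on{level\_rank}}}$ does not give the column when a row has gaps, so this subroutine is unjustified as stated. Fortunately every use of it in your argument is vacuous and can be excised: if $c$ has any child then $c$ is a domino head and $\bm{{\on{child}}}_{T^*}(c,1)$ is automatically the cell directly below (so $\bm{{\on{below}}}_P(c)$ needs only the test $\bm{{\on{parent}}}_{T^*}(c+1)=c$); for $\bm{{\on{right}}}_P(c)$ the bit $L[w_c+1]=1$ already certifies that the level-successor sits in the adjacent column of the same row (its left neighbor is its level-predecessor, i.e.\ $c$), with dummy nodes at the start of each level preventing false positives across row boundaries. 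Similarly, your ``analogous'' column-visibility case should be made explicit: since consecutive pre-order positions increase depth by at most one, $c_2$ lies on the first-child chain below $c_1$ if and only if $c_2-c_1=\bm{{\on{depth}}}_{T^*}(c_2)-\bm{{\on{depth}}}_{T^*}(c_1)$, which is the constant-time test the paper uses. With these simplifications your proof coincides with the paper's.
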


\begin{proof}
Summing the size $2n+o(n)$ of the covering tree and $n+o(n)$ of the left bitstring, the size of our data structure is $3n +o(n)$. Next, we show how queries are answered using operations that are supported in $\mathcal{O}(1)$ by
Lemmas~\ref{thm:succinct_bit_string} and \ref{thm:succinct_ordinal_tree}. 
In what follows, given a cell $c$ (represented by its pre-order index in $T^*$), we let $w_c \leftarrow  \bm{{\on{level\_order\_rank}}}_{T^*}(c)$. 
\begin{itemize}
\item[] {\textbf{Navigation Queries:}} 
\noindent The left neighbor of $c$, if it exists, is its level-predecessor, in which case $L[w_c] = 1$. 
Similarly, the right neighbor of $c$, if it exists, is its level-successor, in which case $L[w_c+1] = 1$. 
Moreover, by construction of $T^*$, if there exists a cell $c'$ above $c$, then $c$ must be the first child of $c'$.  
Finally, the below neighbour of $c$ is the leftmost child of $c$, which exists if the next node in the pre-order traversal is a child of $c$. 

\begin{align*}
    \bm{{\on{left}}}_P(c) &= 
    \begin{cases}
        \bm{{\on{level\_pred}}}_{T^*}(c) ,&\mbox{if $L[w_c] = 1$,} \\
        \texttt{null}, & \mbox{otherwise.}   \\
    \end{cases} \ \\ \ \\ 
    \bm{{\on{right}}}_P(c) &= 
    \begin{cases}
        \bm{{\on{level\_succ}}}_{T^*}(c) ,&\mbox{if $L[w_c+1] = 1$,} \\
        \texttt{null}, & \mbox{otherwise.} \\
    \end{cases}\ \\ \ \\ 
    \bm{{\on{above}}}_P(c) &= 
    \begin{cases} 
        \bm{{\on{parent}}}_{T^*}(c)\,
        &\mbox{if $\bm{{\on{child}}}_{T^*}(\bm{{\on{parent}}}_{T^*}(c),1) = c$} \\
        \texttt{null}, &\mbox{otherwise.} \\
    \end{cases} \vspace{2mm} \ \\ \ \\ 
    \bm{{\on{below}}}_P(c) &= 
    \begin{cases}
        \bm{{\on{child}}}_{T^*}(c,1),  &\mbox{if $\bm{{\on{parent}}}_{T^*}(c+1) = c$,} \\
        \texttt{null}, &\mbox{otherwise.}
    \end{cases}
\end{align*}

\item[]
{\textbf{Visibility Queries [$\bm{{\on{is\_visible}}}_{P}(c_1,c_2)$]:}}

To determine the visibility of two cells $c_1$ and $c_2$, where $c_1<c_2$, 
we first check whether $c_1$ and $c_2$ are on the same row by evaluating
$\bm{{\on{depth}}_{T^*}}(c_1) = \bm{{\on{depth}}_{T^*}}(c_2)$. If they are, 
then $c_1$ and $c_2$ are visible via a horizontal line if and only if all bits associated with cells after $c_1$ and up to $c_2$ in the left bitstring $L$ are 1, that is, 
$\bm{{\on{rank}}}_L(w_{c_2}) - \bm{{\on{rank}}}_L(w_{c_1}) = w_{c_2}-w_{c_1}$.
Suppose $c_1$ and $c_2$ are not on the same row. Then, by the construction of $T^*$, $c_1$ and $c_2$ are visible if and only if $c_2$ is a left-most descendent of $c_1$, that is, $c_2$ must find $c_1$ as an ancestor and all cells on the path from $c_1$ to $c_2$ appear consecutively in the pre-order traversal of $T^*$. In particular, $c_1$ and $c_2$ are visible if and only if 
$c_2 - c_1 = \bm{{\on{depth}}}_{T^*}(c_2) - \bm{{\on{depth}}}_{T^*}(c_1)$.
Refer to Algorithm~\ref{alg:is_vis} for the pseudocode.
\begin{center}
\scalebox{.9}{
\begin{minipage}{13cm}
    \begin{algorithm}[H]
\caption{Visibility Query for Nice Polyominoes}
\label{alg:is_vis}
\begin{algorithmic}
    \Function{$\bm{{\on{is\_visible}}}_P$}{$c_1, c_2$}
    \If{$depth_{T^*}(c_1) = depth_{T^*}(c_2)$}
        \State {$w_{c_1} \leftarrow \textbf{level\_order\_rank}_{T*}(c_1)$}
        \State {$w_{c_2} \leftarrow \textbf{level\_order\_rank}_{T*}(c_2)$}
        \If{$(rank_{L}(w_{c_2}) - rank_{L}(w_{c_1})) = (w_{c_2} - w_{c_1})$}
        \State \Return{\texttt{true}}  
        \EndIf
    \ElsIf{($c_2 - c_1) = (depth_{T^*}(c_2) - depth_{T^*}(c_1))$}
        \State \Return{\texttt{true}}
    \EndIf
    \State \Return{\texttt{false}}
    \EndFunction
\end{algorithmic}    
\end{algorithm}
\end{minipage}}
\end{center}

\end{itemize}

\end{proof}

\begin{figure}[t]
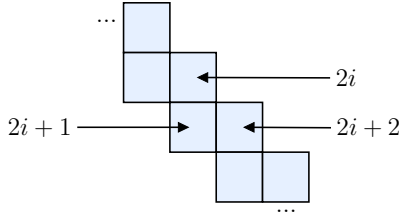

\centering
\scalebox{0.8}{\figStaircaseLabelledColoured}
    \caption{A staircase polyomino with $n$ cells formed such that any cell $c_{2i+1}$ finds $c_{2i}$ on its top and $c_{2i+2}$ on its right, for any $i\leq \lfloor n/2 \rfloor$ }
    \label{fig:staircase_polyomino}
\end{figure}

\subsection{Compact Data Structure for General Polyominoes}
The data structure of Theorem~\ref{th:mainBounded}
can be used to store any arbitrary polyomino compactly, 
but less space-efficiently than $3n+o(n)$. 
To see this, think of a \emph{staircase polyomino} formed by $n$ cells
(see Figure~\ref{fig:staircase_polyomino}). 
For this polyomino to fit in a strip, the height and width of the strip must be at least $n/2$.
Hence, counting dummy nodes, the left bitstring $L$ takes $n + n/2 + o(n)$ bits, and the covering tree $T^*$ takes $2(n + n/2) + o(n)$ bits, summing up to $4.5n + o(n)$ bits. In this section, we explain how this can be improved. 

Given any polyomino $P$ with $n$ cells and a function $f(n) \in \omega(1)$, 
we partition $P$ into \emph{slices} via horizontal cuts. Specifically, the first slice in the \emph{slicing of type} $i$, for any $i \in [1,f(n)]$, contains the first $i$ rows in $P$, and the remainder of $P$ is partitioned into further slices, each formed by the top $f(n)$ rows among the remaining rows. The last slice may have less than $f(n)$ rows (see Figure~\ref{fig:slicing}). 

For a given slicing
the \emph{top cells} are cells in the first row of each slice $s_j$ for $j \geq 2$, and \emph{bottom cells} are the cells that are in the last row of slice 1 or the row $f(n)$ of any slice $j\geq 2$.  
We refer to top and bottom cells as the \emph{boundary cells}.
Note that if a cell is a top (respectively bottom) cell for a given slicing of type $i$, it cannot be a top (respectively bottom) cell for any other slicing of type $i'\neq i$. Therefore, the number of boundary cells across all possible slicings is $2n$, implying that there is a slicing $i^*$---the one with a minimum number of boundary cells---for which the number of boundary cells is at most $2n/f(n)$. 

Let $h^*$ denote the number of slices in the slicing of type $i^*$. Arrange slices $s_1, \ldots, s_{h^*}$,  
from left to right, to form a (disconnected) polyomino $P_d$ such that the top vertices appear in the first row of $P_d$. Given that there are at most $f(n)$ rows in each slice, $P_d$ fits within a strip of height $f(n)$. We adapt the data structure of Theorem~\ref{th:mainBounded} to store $P_d$. In particular, we form a covering tree $T^*$ formed by $n+f(n)+1$ nodes
and a left bitstring $L$ of $n+f(n)+1$ bits. 
Recall that each cell is represented by its index in the pre-order traversal of $T^*$. Note that top cells appear at level $1$ of $T^*$, bottom cells of $s_1$ are at level $i^*$, and other bottom cells are at level $f(n)$.

We also store two bitstrings $\on{Top}$ and $\on{Bot}$ to encode connections between slices. Precisely, we visit top (respectively bottom) cells, in the level-order as they appear in $T^*$, and store a bit indicating whether the cell is connected to a cell above (respectively below). We store $\on{Top}$ and $\on{Bot}$ bitstrings using the encoding of Lemma~\ref{thm:succinct_bit_string}.  
%
%
Observe that the cell corresponding to the $i$'th $1$ in $\on{Top}$ is connected to the cell corresponding to the $i$'th $1$ in $\on{Bot}$. 
We also store two integers FirstTop and FirstBot, which respectively encode the number of cells in the topmost and bottommost rows of the first slice. 

\begin{figure}
\centering
\begin{subfigure}{\textwidth}
\centering
\scalebox{1.3}{\blueSlicinga}
\end{subfigure}\hskip1ex
\begin{subfigure}{\textwidth}
\centering
\scalebox{1.3}{\blueSlicingb}
\end{subfigure}
{\caption{ 
Slicing of type $i^*=2$ of a polyomino with $f(n) = 4$. Boundary cells are shaded.\label{fig:slicing}} }
\end{figure}

\begin{theorem}\label{th:mainGeneral}
For an arbitrary function 
$f(n) \in \omega(1)$, 
it is possible to store polyominoes of size $n$ in $3(n + f(n)) + o(n)$ bits, while supporting navigation queries in ${\mathcal{O}(1)}$ and visibility queries in $\mathcal{O}(n/f(n))$.
\end{theorem}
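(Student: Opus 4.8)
The plan is to reduce queries on $P$ (equivalently on $P_d$ together with the inter-slice bitstrings $\on{Top}$, $\on{Bot}$) to the constant-time primitives already available: the operations on the covering tree $T^*$ from Lemma~\ref{thm:succinct_ordinal_tree}, rank/select on the succinct bitstrings $L$, $\on{Top}$, $\on{Bot}$ from Lemma~\ref{thm:succinct_bit_string}, and the constant-time nice-polyomino machinery of Theorem~\ref{th:mainBounded} applied to $P_d$. First I would verify the space bound: the covering tree has $n+f(n)+1$ nodes costing $2(n+f(n)+1)+o(n)$ bits; the left bitstring $L$ costs $n+f(n)+1+o(n)$ bits; the bitstrings $\on{Top}$ and $\on{Bot}$ each have length at most (number of boundary cells of the chosen slicing) $\le 2n/f(n) = o(n)$, contributing only $o(n)$ bits; and FirstTop, FirstBot are two integers costing $O(\log n)=o(n)$ bits. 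The total is $2(n+f(n)) + (n+f(n)) + o(n) = 3(n+f(n)) + o(n)$, as claimed.

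\textbf{Navigation queries.} Left and right neighbours are entirely internal to a single slice, so they are answered exactly as in Theorem~\ref{th:mainBounded} via $\bm{\on{level\_pred}}_{T^*}$, $\bm{\on{level\_succ}}_{T^*}$ and the bit of $L$ at the appropriate level-order rank; this is $\mathcal{O}(1)$. The above and below queries are the only ones that can cross a slice boundary. For a non-boundary cell $c$ (detected by checking its depth in $T^*$ against $1$, $i^*$, and $f(n)$ and, for boundary depths, comparing $\bm{\on{level\_rank}}_{T^*}(c)$ against FirstTop/FirstBot), the answer is the same as in the nice case. For a top cell $c$ whose $\on{Top}$-bit is $1$, I would compute its rank $k$ among the $1$-bits of $\on{Top}$ via $\bm{\on{rank}}_{\on{Top}}$, then find the matching bottom cell as the $k$-th $1$-bit position of $\on{Bot}$ via $\bm{\on{select}}_{\on{Bot}}$, then translate that into a pre-order node of $T^*$ using $\bm{\on{level\_select}}_{T^*}$ on level $i^*$ or $f(n)$ (deciding which by comparing the position to FirstBot) — the full logic is spelled out in Algorithms~\ref{alg:slicing-above} and \ref{alg:slicing-below}. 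Each step is a constant number of rank/select and tree operations, so above/below are $\mathcal{O}(1)$. This establishes the navigation part.

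\textbf{Visibility queries.} If $c_1$ and $c_2$ lie in the same slice of $P_d$, the query is answered in $\mathcal{O}(1)$ by the nice-polyomino routine $\bm{\on{is\_visible}}_{P_d}$ of Theorem~\ref{th:mainBounded}. Otherwise a vertical visibility line from $c_2$ up to $c_1$ must pass through a chain of slices, and within each slice it traces a left-most descendant path (the same structural characterisation used in Theorem~\ref{th:mainBounded}), jumping from the top cell of one slice to the matching bottom cell of the slice below via the $\on{Top}$/$\on{Bot}$ correspondence. So I would walk this path slice by slice: in the current slice, verify via the depth-difference test $c' - c_1 = \bm{\on{depth}}_{T^*}(c') - \bm{\on{depth}}_{T^*}(c_1)$ (or the analogous test against the slice's top boundary) that the segment is an unbroken left-most descendant chain; if it reaches $c_1$ return true, if it breaks return false, otherwise follow $\bm{\on{above}}_P$ across the boundary and repeat. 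Horizontal visibility is still the single rank-on-$L$ test. Since the slicing of type $i^*$ has at most $f(n)$ slices when $n/f(n) \ge f(n)$... more carefully: the path crosses at most one slice boundary per $f(n)$ rows, and the total number of rows is at most $n$, so the number of boundary crossings is $\mathcal{O}(n/f(n))$, each handled in $\mathcal{O}(1)$; hence visibility runs in $\mathcal{O}(n/f(n))$. Algorithm~\ref{alg:is_vis_slicing} gives the pseudocode.

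\textbf{Main obstacle.} The routine parts are the space accounting and the same-slice cases; the delicate point is the correctness of the boundary-crossing logic — showing that the top/bottom bitstring correspondence faithfully records which top cell sits directly above which bottom cell (this needs the observation that the $k$-th $1$ of $\on{Top}$ matches the $k$-th $1$ of $\on{Bot}$, which in turn relies on the level-order in which boundary cells are listed being consistent between $T^*$ and the two bitstrings), and that the left-most-descendant characterisation of a vertical line composes correctly across a cut (a line leaving the bottom of one slice enters the top of the next at the same column, and the covering-tree construction was chosen precisely so that ``same column, one level down, along a vertical line'' corresponds to ``first child''). I expect most of the proof's length to go into carefully stating and checking these two invariants and then reading off the query procedures from Algorithms~\ref{alg:slicing-above}, \ref{alg:slicing-below}, and \ref{alg:is_vis_slicing}.
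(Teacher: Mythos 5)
Your proposal follows essentially the same route as the paper: the identical space accounting for $T^*$, $L$, $\on{Top}$, $\on{Bot}$, FirstTop and FirstBot; above/below answered by translating a boundary cell's level rank into a position in $\on{Top}$/$\on{Bot}$ and matching the $k$'th $1$ across the two bitstrings; and visibility by walking the left-most-descendant chain slice by slice, crossing each cut via $\bm{\on{above}}_P$, for $\mathcal{O}(n/f(n))$ iterations. The argument is correct and matches the paper's proof, including the two invariants you flag as the delicate points.
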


\begin{proof}
There are $f(n)+1$ dummy nodes in $P_d$. Therefore, by Lemmas~\ref{thm:succinct_bit_string} and \ref{thm:succinct_ordinal_tree}, we use $2(n+f(n))+o(n)$ bits for $T^*$, $n+f(n)+o(n)$ bits for $L$, $2\left(2n/f(n) + o(n/f(n))\right)$ bits for $\on{Top}$ and $\on{Bot}$, and $O(\log n)$ bits for FirstTop and FirstBot. 
Since $f(n) \in \omega(1)$, the total space is $3(n +f(n))+ o(n)$. Next, we describe how queries are answered. 
\begin{itemize}
\item[] 
\textbf{Navigation queries:}
Given a cell $c$ in the input polyomino $P$, the left and right neighbors of $c$ are the same as those in $P_d$, and we can report them in $\mathcal{O}(1)$ as described in Theorem~\ref{th:mainBounded}.
Next, we describe how to report the top and bottom neighbor of $c$, as illustrated in Algorithm~\ref{alg:slicing-above} and Algorithm~\ref{alg:slicing-below}, respectively.
\begin{enumerate}
    \item $\bm{{\on{above}}}_P(c)$: if $\bm{{\on{is\_top}}}_P = \texttt{false}$, $c$ is not a top cell and we compute $\bm{{\on{above}}}_P(c)$ as in Theorem~\ref{th:mainBounded}; otherwise, $c$ is a top cell.
    The rank of $c$ among all top cells can be computed as $k \leftarrow \bm{{\on{level\_rank}}}_{T^*}(c) - \text{FirstTop}-1$.
    If $\text{Top}[k] = 0$, then $c$ has no neighbor above it, and we return \texttt{null}. 
    Otherwise, $c$ is connected to the cell $c'$ corresponding to the $k$'th $1$ in $\on{Bot}$. 
    We compute $r_{c'} \leftarrow \bm{{\on{select}}}_{\on{Bot}}(k)$, giving us the index of $c'$ among all bottom nodes. If $r_{c'} < \text{FirstBot}$, $c'$ is in the first slice 
    (at depth $i^*$ of $T^*$), and we return $\bm{{\on{level\_select}}}_{T^*}(i^*,r_{c'}+1)$; 
    otherwise, $c'$ is at level $f(n)$ and we return $\bm{{\on{level\_select}}}_{T^*}(f(i),r_{c'}-\text{FirstBot}+1)$. 
    \item $\bm{{\on{below}}}_P(c)$: if $\bm{{\on{is\_bot}}}_P = \texttt{false}$, $c$ is not a bottom cell and we compute $\bm{{\on{bottom}}}_P(c)$ as in Theorem~\ref{th:mainBounded}; otherwise, $c$ is a bottom cell.
    We first find the rank $k$ of $c$ among all bottom cells.
    If $\bm{{\on{depth}}}_{T^*}(c) = i^*$, $c$ belongs to the first slice and we set  $k \leftarrow \bm{{\on{level\_rank}}}_{T^*}(c) -1 $, 
    and if $\bm{{\on{depth}}}_{T^*}(c) = f(n)$, $c$ belongs to another slice and we set $k \leftarrow \bm{{\on{level\_rank}}}_{T^*}(c) + \text{FirstBot} -1$. 
    If $\on{Bot}[k] = 0$, then $c$ has no neighbor below it, and we return \texttt{null}. 
    Otherwise, $c$ is connected to the cell $c'$ corresponding to the $k$'th $1$ in $\on{Top}$. 
    Let $r_{c'} \leftarrow \bm{{\on{select}}}_{\on{Top}}(k)$.
    Then $c'$ is at level 1 and finds other $\text{FirstTop} + r_{c'}+1$ nodes on its left at level 1, and we return $\bm{{\on{level\_select}}}_{T^*}(1,\text{FirstTop}+r_{c'}+1)$.
\end{enumerate}
\item []
\textbf{Visibility queries [$\bm{{\on{is\_visible}}}_{P}(c_1,c_2)$]:}
Suppose we are given two cells $c_1$ and $c_2$ in $P$ with $c_1 < c_2$. We describe how to report the visibility of $c_1$ and $c_2$, as described in  Algorithm~\ref{alg:is_vis_slicing}.

First, we use $T^*$ and $L$ to check whether $c_1$ and $c_2$ are visible horizontally, as described in Theorem~\ref{th:mainBounded}. In particular, we return \texttt{true} if $\bm{{\on{is\_visible}}}_{p_d}(c_1,c_2)$ returns \texttt{true}. 
Otherwise, we iteratively scan through slices to check if $c_1$ and $c_2$ are visible through a vertical line. Let $c'$ be a node that is initially $c_2$ and represents a node visible from $c_2$ at the slide considered in the current iteration. 
We repeatedly apply the following process: (i) If  
$c'-c_1 = \bm{{\on{depth}}}_{T^*}(c') - \bm{{\on{depth}}}_{T^*}(c_1)$, return true. 
This is similar to checking whether $c'$ and $c_1$ are visible in $P_d$, (see Theorem~\ref{th:mainBounded}). (ii) Otherwise, 
find the ancestor of $c'$ at level $1$ of $T^*$ via $anc = \bm{{\on{ancestor}}}_{T^*}(c',1)$. 
Now, if $c'$ is not a leftmost descendant of $anc$, then no cell on the first row of the current slice is visible to $c'$, and we return \texttt{false}. That is, if $c'-anc \neq  \bm{{\on{depth}}}_{T^*}(c') - \bm{{\on{depth}}}_{T^*}(anc)$, return \texttt{false}. (iii) Otherwise, update $c'$ to be $\bm{{\on{above}}}_{P}(anc)$. If $c'$ becomes \texttt{null}, return \texttt{false} (the ray moving upwards from $c_2$ hits a boundary of $P$ at $anc$ without passing through $c_1$); moreover, if $c'$ becomes smaller than $c_1$, return  \texttt{false} (the nodes visible from $c_2$ in the above slides can only appear before $c'$ in preorder). Otherwise, when $c'$ exists and is at least $c_1$, repeat by going to Step (i). 
In the worst case, we may scan over all slices before returning \texttt{false} at Step (iii). The running time is thus proportional to the number of slices, which is $\mathcal{O}(n/f(n))$.

The running time is proportional to the number of slices, which is $\mathcal{O}(n/f(n))$.
To illustrate, suppose $c_1 \leftarrow b$ and $c_2 \leftarrow a$ in Figure~\ref{fig:slicing}. Thus, $c'$ is initially $a$. At step (i), we check if $a$ is a leftmost descendent of $b$ in $T^*$. Since it is not, at step (ii), we find $anc$ as the ancestor of $a$ at level 1, which is $d$. Given that $a$ is a leftmost descendant of $d$ in $T^*$, we move on to step (iii), where we find the cell above $d$ in $P$, which exists and is $e$. Therefore, we update $c'$ to be $e$, which remains larger than $b$.  
Therefore, we start the new iteration by checking whether $e$ is a leftmost descendent of $b$ in step (i). Since it is not, at step (ii), we find $p$ as the ancestor of $e$ at level 1. Since $e$ is not the leftmost descendent of $p$, we return \texttt{false}. 
\end{itemize}
\end{proof}
\algoAboveSlicing
\algoBelowSlicing
\algoVisSlicing

\newpage
\vspace{2em}
\noindent In Theorem~\ref{th:mainGeneral}, setting $f(n) = \left\lceil \varepsilon \cdot \frac{n}{3} \right\rceil$ and $f(n) = \left\lceil \frac{n}{t(n)} \right\rceil$, for any $t(n) \in \omega(1)$, gives:

\begin{corollary}\label{corolOne}
It is possible to answer neighborhood queries in a polyomino $P$ with $n$ cells in $\mathcal{O}(1)$ using an oracle that takes $(3+\epsilon)n$ bits of space and answers visibility queries in $\mathcal{O}(1)$, for any constant $\epsilon > 0$. 
\end{corollary}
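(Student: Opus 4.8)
The plan is to instantiate Theorem~\ref{th:mainGeneral} at a suitable choice of the parameter $f(n)$ and simply read off the resulting bounds. Recall that Theorem~\ref{th:mainGeneral} guarantees, for every $f(n)\in\omega(1)$, a data structure using $3(n+f(n))+o(n)$ bits that answers navigation queries in $\mathcal{O}(1)$ time and visibility queries in $\mathcal{O}(n/f(n))$ time. The corollary just wants the visibility time pushed down to $\mathcal{O}(1)$ at the cost of an extra $\epsilon n$ in the space, so I want $n/f(n)=\mathcal{O}(1)$ while keeping $3f(n)\le \epsilon n + o(n)$.

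First I would set $f(n)=\left\lceil \varepsilon\cdot \tfrac{n}{3}\right\rceil$, as suggested in the sentence preceding the corollary. I then need to check the two hypotheses and the two conclusions of Theorem~\ref{th:mainGeneral} for this choice. The hypothesis $f(n)\in\omega(1)$ holds because $\varepsilon n/3\to\infty$ for any fixed $\varepsilon>0$ (here $\varepsilon$ is a constant, not depending on $n$). For the space bound, $3(n+f(n))+o(n) = 3n + 3\left\lceil \varepsilon n/3\right\rceil + o(n) \le 3n + \varepsilon n + 3 + o(n) = (3+\varepsilon)n + o(n)$, which is the claimed $(3+\epsilon)n$ bits (absorbing the additive constant and the $o(n)$ term; if one wants exactly $(3+\epsilon)n$ bits one can reindex $\epsilon$ slightly, or note the statement already allows an $o(n)$ slack implicitly since it is the standard convention — but to be safe I would phrase the conclusion as $(3+\epsilon)n + o(n)$, matching Theorem~\ref{th:mainGeneral} and Corollary~\ref{corolOne}'s intent). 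For the navigation queries, the $\mathcal{O}(1)$ bound is inherited verbatim from Theorem~\ref{th:mainGeneral}. For the visibility queries, the time is $\mathcal{O}(n/f(n)) = \mathcal{O}\!\left(n/\lceil \varepsilon n/3\rceil\right) = \mathcal{O}(3/\varepsilon) = \mathcal{O}(1)$, since $\varepsilon$ is a fixed constant.

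The only subtlety — and the closest thing to an obstacle — is the bookkeeping of the constant $\varepsilon$ versus the constant hidden in the $\mathcal{O}(\cdot)$ for visibility: because $1/\varepsilon$ appears as a factor in the visibility running time, the constant in $\mathcal{O}(1)$ blows up as $\varepsilon\to 0$, but for each fixed $\varepsilon>0$ it is genuinely a constant, which is exactly what the corollary asserts. There is no real content beyond this substitution; the work was already done in Theorem~\ref{th:mainGeneral}. So the proof is essentially one line: apply Theorem~\ref{th:mainGeneral} with $f(n)=\lceil \varepsilon n/3\rceil$, verify $f(n)\in\omega(1)$, and simplify $3(n+f(n))+o(n)$ to $(3+\epsilon)n+o(n)$ and $\mathcal{O}(n/f(n))$ to $\mathcal{O}(1)$. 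I would write it in two or three sentences and not belabor it.
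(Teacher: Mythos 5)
Your proposal is correct and is exactly the paper's argument: the authors obtain Corollary~\ref{corolOne} by substituting $f(n)=\lceil\varepsilon\cdot n/3\rceil$ into Theorem~\ref{th:mainGeneral}, which yields $3(n+f(n))+o(n)=(3+\epsilon)n+o(n)$ bits and $\mathcal{O}(n/f(n))=\mathcal{O}(1/\epsilon)=\mathcal{O}(1)$ visibility time. Your side remarks about absorbing the $o(n)$ slack and the $1/\epsilon$ constant hidden in the $\mathcal{O}(1)$ are accurate and consistent with the paper's intent.
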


\begin{corollary}\label{corolTwo}
It is possible to answer neighborhood queries in a polyomino $P$ with $n$ cells in $\mathcal{O}(1)$ using an oracle that takes $3n+o(n)$ bits of space and answers visibility queries in $\mathcal{O}(t(n))$, for any function $t(n) \in \omega(1)$.
\end{corollary}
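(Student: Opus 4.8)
The final statement is Corollary~\ref{corolTwo}, which I read as a pure instantiation of Theorem~\ref{th:mainGeneral}. So the plan is to not prove anything new: the work is entirely in checking that the parameter choice $f(n) = \left\lceil n/t(n) \right\rceil$ is admissible for Theorem~\ref{th:mainGeneral} and that plugging it in yields exactly the claimed bounds. First I would verify that this $f$ satisfies the hypothesis $f(n) \in \omega(1)$: since $t(n) \in \omega(1)$ we have $t(n) = o(n)$ is not guaranteed, so I should be slightly careful — if $t(n)$ grows faster than $n$ then $\lceil n/t(n)\rceil$ is eventually $1$, which is not $\omega(1)$. The clean fix, which I expect the authors intend, is to take $f(n) = \max\{\lceil n/t(n)\rceil, g(n)\}$ for some fixed slowly-growing $g(n) \in \omega(1)$ with $g(n) = o(t(n))$ when possible, or simply to observe that WLOG $t(n) = O(n)$ (one may always replace a faster-growing $t$ by $\min\{t(n), n\}$ without weakening the visibility bound, since $\mathcal{O}(t(n))$ time is an upper bound and the structure never needs more than $\mathcal{O}(n)$). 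With $t(n) \le n$ we get $f(n) = \lceil n/t(n)\rceil \ge 1$ and $f(n) \to \infty$ because $t(n) = o(n)$ forces $n/t(n) \to \infty$; hence $f(n) \in \omega(1)$ as required.

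Next I would substitute into the two quantities delivered by Theorem~\ref{th:mainGeneral}. For the space, the theorem gives $3(n + f(n)) + o(n)$ bits; since $f(n) = \lceil n/t(n) \rceil$ and $t(n) \in \omega(1)$, we have $f(n) = o(n)$, so $3(n+f(n)) + o(n) = 3n + o(n)$ bits, matching the corollary's claim. For the query times, navigation queries are $\mathcal{O}(1)$ directly from the theorem. For visibility, the theorem gives $\mathcal{O}(n/f(n))$; and $n/f(n) = n/\lceil n/t(n)\rceil \le n/(n/t(n)) = t(n)$, so the visibility time is $\mathcal{O}(t(n))$, again matching. That is the whole argument.

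The only genuine subtlety — and the step I'd flag as the "main obstacle," though it is minor — is the edge behavior of $t(n)$ relative to $n$ and the integer-rounding: one must make sure $f(n)$ is simultaneously $\omega(1)$ and $o(n)$ and that $n/f(n) = O(t(n))$, all of which hold once we normalize to $1 \le t(n) \le n$ (or, equivalently, define $f(n)$ as the max with a fixed $\omega(1)$ function). I would state this normalization explicitly in one sentence at the start of the proof so that the floor never collapses to $1$ prematurely. Everything else is a one-line substitution into Theorem~\ref{th:mainGeneral}, so the proof should read:

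\begin{proof}
We may assume $1 \le t(n) \le n$ for all sufficiently large $n$; indeed, replacing $t(n)$ by $\min\{t(n),n\}$ keeps $t(n) \in \omega(1)$ and only makes the claimed time bound $\mathcal{O}(t(n))$ stronger. Set $f(n) = \left\lceil n/t(n) \right\rceil$. Since $t(n) \in \omega(1)$ with $t(n) \le n$, we have $n/t(n) \to \infty$, so $f(n) \in \omega(1)$, and $f(n) \le n/t(n) + 1 = o(n)$. Applying Theorem~\ref{th:mainGeneral} with this $f$, the space is $3(n + f(n)) + o(n) = 3n + o(n)$ bits, navigation queries take $\mathcal{O}(1)$ time, and visibility queries take $\mathcal{O}(n/f(n)) = \mathcal{O}\!\left(n \big/ (n/t(n))\right) = \mathcal{O}(t(n))$ time, as claimed.
\end{proof}
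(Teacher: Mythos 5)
Your proposal is correct and takes exactly the same route as the paper, which simply instantiates Theorem~\ref{th:mainGeneral} with $f(n) = \lceil n/t(n)\rceil$. Your extra normalization $t(n) \le n$ to keep $f(n) \in \omega(1)$ is a sensible bit of care the paper omits, but it does not change the argument.
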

\vspace*{-2mm}

\section{Succinct Data Structure for Bar Graphs}
A \emph{bar graph}, also known as a \emph{wall polyomino}~\cite{BLPS96,Fer96}, is a polyomino $G$ consisting of $m$ \emph{bars} or \emph{columns} of cells 
such that the lowermost cell of bar $i$ shares an edge with the lowermost cell of bar $i+1$, for $i<m$, 
as shown in Figure~\ref{fig:basicBarGraph}.
For example, there are four distinct bar graphs with $n=3$ cells, namely, \scalebox{.35}{\BarrOne}, \scalebox{.35}{\BarrTwo}, \scalebox{.35}{\BarrThree}, and \scalebox{.35}{\BarrFour}, formed respectively by $3,2,2,$ and $1$ bars. 
We often discuss the \emph{canonical ordering} of cells in a bar graph, in which cells appear bar by bar from the lowest cell of each bar to its highest. For example, the canonical ordering of \scalebox{.5}{\BarrSeven} is $a,b,c,d,e,f,g$. 
Observe that there is a bijection between the collection of compositions of an integer $n$ (a representation of $n$ as a sum of positive integers) and the collection of bar graphs formed by $n$ cells. Since the number of distinct compositions of $n$ is $2^{n-1}$, the following holds.
\vspace*{-2mm}
\begin{proposition}
At least $n-1$ bits are needed to store a bar graph with $n$ cells.
\end{proposition}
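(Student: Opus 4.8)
The plan is to establish a bijection between bar graphs with $n$ cells and compositions of $n$, and then apply a standard counting/information-theoretic argument. First I would define the map explicitly: given a bar graph $G$ with $m$ bars, read off the heights $(a_1, a_2, \ldots, a_m)$ of the bars from left to right, where $a_i \geq 1$ for each $i$. Since the cells of $G$ partition into the $m$ bars, we have $a_1 + a_2 + \cdots + a_m = n$, so $(a_1, \ldots, a_m)$ is a composition of $n$. Conversely, any composition $(a_1, \ldots, a_m)$ of $n$ determines a unique bar graph: place $m$ bars side by side with the $i$-th bar having $a_i$ cells, aligned so that the bottom cells form a single row (which is exactly the adjacency condition in the definition of a bar graph). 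I would note that this forward map and inverse map are mutually inverse, so the correspondence is a bijection.

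Next I would recall the elementary fact that the number of compositions of $n$ into a sequence of positive integers is $2^{n-1}$ (for instance, via the ``stars and bars'' argument: a composition corresponds to choosing a subset of the $n-1$ gaps between consecutive units to place dividers). Combining this with the bijection, the number of distinct bar graphs with $n$ cells is exactly $2^{n-1}$.

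Finally, the counting lower bound: any data structure (or encoding scheme) that can represent and distinguish all $2^{n-1}$ bar graphs with $n$ cells must assign distinct bitstrings to distinct bar graphs, hence must use at least $\lceil \log_2 (2^{n-1}) \rceil = n-1$ bits in the worst case — otherwise, by pigeonhole, two distinct bar graphs would receive the same encoding. This gives the claimed bound.

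I do not anticipate a genuine obstacle here; the proposition is a routine combinatorial counting argument. The only point that warrants a sentence of care is verifying that the bijection is well-defined in both directions — in particular, that the bottom-alignment condition in the definition of a bar graph forces a \emph{unique} polyomino for each composition (there is no freedom in how the bars are stacked, since each bar is a contiguous vertical run of cells and all bars rest on a common bottom row). Once that is pinned down, the rest is immediate.
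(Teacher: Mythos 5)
Your proof is correct and follows essentially the same route as the paper: a bijection between bar graphs with $n$ cells and compositions of $n$, the count $2^{n-1}$ via the $n-1$ gap choices, and the pigeonhole lower bound of $n-1$ bits. No gaps.
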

\begin{proof}
Observe that there exists a bijection between the collection of compositions of $n$ and the collection of bar graphs of size $n$.
A composition of a positive integer $n$ is a way of writing $n$ as the sum of a sequence of positive integers. 
It is straightforward to count the number of possible compositions for an integer $n$ as follows.
For a positive integer $n$, we can have a string of $n$ 1's and place either a `+' or `,' sign in a spot between two adjacent 1's.
There are $n-1$ available spots between $n$ 1's, with a binary choice (`+' or `,') to fill each spot, so there are $2^{n-1}$ distinct compositions.
Each bar graph has a corresponding composition, and vice versa.
For example, when $n = 6$, one possible string is $1+1,1+1+1,1$ creating the composition sequence $\{2,3,1\} = 2+3+1$. 
The corresponding bar graph comprises bars of sizes 2, 3, and 1 placed side-by-side from left to right.
The number of distinct compositions of $n$ is $2^{n-1}$, and so, at least 
$n-1$ bits are needed to distinguish all bar graphs of size $n$. 
\end{proof}

In what follows, we present a succinct data structure that stores a bar graph with $n$ cells using $n+o(n)$ bits. Our structure has the following components: \vspace{1mm}
\newcommand{\looktable}{R_k}
\begin{itemize}
\item[]
\textbf{Lookup Table $\bm{{\looktable}}$:}
Given a bar graph $G$ with $n$ cells, let $k \leftarrow \lfloor \log \log n \rfloor$ and define a lookup table $\looktable$ that has an entry for each possible bar graph $H$ of size $k$. Each cell in $H$ is referred to by its index in the canonical ordering of $H$. We store the following for the entry of $H$ in $\looktable$:
\begin{itemize}
    \item for each pair $(c,c')$ of cells in $H$, a bit $\bm{{\on{vis}}}(c,c')$ which is 1 iff $c$ and $c'$ are visible and 0 otherwise. 
    \item the height of the shortest bar among all bars except for the last bar in $H$. This height is bounded above by $k$, and will be stored in $\lceil \log(k)\rceil$ bits.
\end{itemize}
Therefore, for each entry $H$, we store $\binom{k}{2} + \lceil \log(k) \rceil = \mathcal{O}(k^2)$ bits. Since $2^{k-1}$ bar graphs of size $k$ exist,
the size of $\looktable$ is 
$\mathcal{O}(2^{k} k^2) = o(n)$. 
\vspace{1.5mm}

\begin{figure}[t]
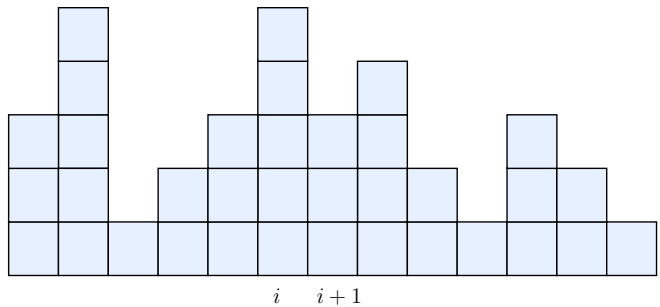

\centering
\scalebox{.7}{\generalBarGraphColoured}\vspace*{0mm}
\caption{A bar graph with $m = 13$ bars. The lowermost cell of bar $i$ is adjacent to the lowermost cell of bar $i+1$.}
\label{fig:basicBarGraph}
\end{figure}

\item[]
{\textbf{Block Partitioning:}}
We partition $G$ into smaller bar graphs, referred to as \emph{blocks} $B_1, B_2, \ldots, B_\ell$ as follows. To form the first block $B_1$, we process bars from left to right until the total number of cells in $B_1$ becomes at least $k$. 
At this point, $B_1$ is complete, and we recursively partition the remainder of $G$ into further blocks. 
The first $k$ cells in each block are called \emph{main cells}, and the remaining cells, if any, are called \emph{leftover cells}. In each block, all leftover cells are contained in the last bar.\vspace{2mm}

\item[]
{\textbf{Cartesian tree $\bm{{C_G}}$:}}
Having $G$  split into blocks, 
we determine the shortest bar in each block and store the number of cells in such bar in a list $s_1,s_2,\ldots,s_\ell$. Then, we create a \emph{Cartesian tree} $C_G$ based on this list: the root of $C_G$ is the minimum number among all values of $s_i$, and we recursively construct its left and right subtrees from the subsequences before and after this number.
Since the number of blocks is $\ell \leq n/\log \log n = o(n)$, 
we can store $C_G$ using $o(n)$ bits. Note that we do not store the values $s_i$, only the structure of $C_G$. 
\vspace{2mm}

\item[]
{\textbf{Bitstrings $\bm{{S_G}}$ and $\bm{{B_G}}$:}}
Consider a bitstring $S_G$ that has a bit for each cell in $G$ in the canonical order. This 
bit is $1$ if the cell is the lowest in its bar and $0$ otherwise. 
Using Lemma~\ref{thm:succinct_bit_string}, we store $S_G$ in $n+o(n)$ bits and refer to each cell by its index in $S_G$. 
We store a second bitstring $B_G$, which also has a bit for each cell $c$ in the canonical order, where the bit of $c$ is 1 if $c$ the first cell in its block and 0 otherwise. 
The number of $1$'s in $B_G$ is at most $n/ \log\log n$.
Using Lemma~\ref{thm:succinct_bit_string_k}, where $k = n/ \log\log n$,
the data structure takes $o(n)$ bits of space to store $B_G$ as illustrated in what follows:

\begin{align*}
\log \binom{n}{\frac{n}{\log\log n}} + o(n)   
& = \frac{n}{\log\log n}  \log(\frac{n}{\frac{n}{\log\log n}}) + o(n) \\
& = \frac{n}{\log\log n}  \log\log\log n + o(n)
\in o(n)  
\end{align*}

 \end{itemize}
%
\begin{theorem}
\label{th:barGraph}
There exists a data structure which stores bar graphs of size $n$ using $n + o(n)$ bits of space while supporting neighborhood and visibility queries in $\mathcal{O}(1)$.
\end{theorem}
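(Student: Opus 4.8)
The plan is to establish the space bound first, then describe how each query reduces to constant-time operations on the components introduced above. The space accounting is essentially already laid out: the lookup table $R_k$ occupies $o(n)$ bits, the Cartesian tree $C_G$ occupies $o(n)$ bits, the bitstring $B_G$ occupies $o(n)$ bits by the displayed calculation using Lemma~\ref{thm:succinct_bit_string_k}, and the dominant term is $S_G$, stored in $n+o(n)$ bits by Lemma~\ref{thm:succinct_bit_string}. Summing gives $n+o(n)$ bits, matching the lower bound of the Proposition and hence establishing succinctness. I would state this summation at the start of the proof.

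Next I would develop a small toolkit of auxiliary functions, each computable in $\mathcal{O}(1)$ from $S_G$, $B_G$, and $C_G$: $\bm{{\on{bar}}}_G(x)$ (which bar a cell $x$ lies in, via $\bm{{\on{rank}}}_{S_G}$), $\bm{{\on{height}}}_G(x)$ and $\bm{{\on{size}}}_G$ of a bar, $\bm{{\on{first\_in\_bar}}}_G(x)$ (via $\bm{{\on{select}}}_{S_G}$), $\bm{{\on{block}}}_G(x)$ (via $\bm{{\on{rank}}}_{B_G}$), and the index of a cell within its block. With these in hand the neighborhood queries are immediate: $\bm{{\on{above}}}_G$ and $\bm{{\on{below}}}_G$ inspect the single bit of $S_G$ at $x+1$ or $x$ (a cell has a neighbor above unless the cell above would start a new bar; it has a neighbor below unless it is the lowest cell of its bar), while $\bm{{\on{left}}}_G$ and $\bm{{\on{right}}}_G$ compare the height of $x$ against the sizes of the adjacent bars, using $\bm{{\on{select}}}_{S_G}$ to locate the bottom cells of those bars. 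I would present these four as short case analyses (Algorithm for neighborhood queries), observing each uses a constant number of rank/select calls.

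For visibility I would split into cases mirroring Algorithm~\ref{alg:is_vis_bar}. If $x_1,x_2$ lie in the same bar they are trivially visible; if they have different heights they are not visible (a horizontal sightline needs equal height, a vertical one needs the same bar); if both are bottom cells of their bars they see each other along the ground row. Otherwise both lie at the same height $h>0$ in distinct bars. If they lie in the same block, I would reduce the question to a table lookup: the block restricted to its first $k$ main cells has size exactly $k$, so its shape is an entry of $R_k$, and $\bm{{\on{vis}}}$ in that entry answers the query after adjusting for leftover cells (a leftover cell at height $h$ is visible to $x_2$ iff its horizontal projection into the main part, obtained via $\bm{{\on{left}}}_G$, is). If $x_1,x_2$ lie in different blocks $i<j$, visibility holds iff: (a) $x_1$ sees the height-$h$ cell $w$ in the last bar of block $i$ (a within-block query), (b) the height-$h$ cell $z$ in the first bar of block $j$ sees $x_2$ (another within-block query), and (c) every block strictly between $i$ and $j$, together with the boundary bars, has its shortest bar of height at least $h$ — which by the min-tree property of $C_G$ is exactly the condition that the value at $\bm{{\on{LCA}}}_{C_G}(i,j)$ is at least $h$, a quantity we recover in $\mathcal{O}(1)$ from the stored shortest-bar heights (in $R_k$ for the main part, and from $\bm{{\on{height}}}_G$ of the appropriate boundary bar). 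Each sub-step is a constant number of supported operations.

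The main obstacle I expect is getting the different-block case of visibility exactly right, specifically the interface between the block-level abstraction and the cell-level geometry: one must argue carefully that a horizontal sightline from $x_1$ to $x_2$ passes uninterrupted iff it clears the minimum-height bar of every intervening block (so the $C_G$ range-minimum captures the obstruction), and that the heights stored are the right ones — the shortest bar among all but the last bar of each block, as stored in $R_k$, versus the last bar, whose height must be read separately via $\bm{{\on{height}}}_G$ on $\bm{{\on{select}}}_{B_G}(i+1)-1$. I would prove a small lemma: for $x_1$ in block $i$ and $x_2$ in block $j>i$ at common height $h$, they are visible iff the two boundary-crossing sub-visibilities hold and $\min$ over blocks $i,\dots,j$ of (shortest bar height) $\ge h$, and then observe that this minimum over the contiguous range $i,\dots,j$ equals the label at the LCA in the Cartesian tree $C_G$. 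With that lemma the correctness and the $\mathcal{O}(1)$ running time of $\bm{{\on{is\_visible}}}_G$ both follow, completing the proof.
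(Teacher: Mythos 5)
Your proposal is correct and follows essentially the same route as the paper's own proof: identical space accounting, the same rank/select toolkit on $S_G$ and $B_G$, the same case analysis for neighborhood queries, and the same five-way split for visibility, including the reduction of the cross-block case to two within-block lookups plus a range minimum over intervening blocks realized as $\bm{{\on{LCA}}}_{C_G}(i,j)$, with the last bar of each block handled separately from the $R_k$ entry. The small lemma you propose for the different-block case is exactly the correctness argument the paper carries out in step 4(b) of its visibility procedure.
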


\begin{proof}
In the data structure described above, 
$\looktable$ takes $o(n)$ bits, 
$C_G$ takes $o(n)$  bits, 
$S_G$ takes $n + o(n)$ bits, and 
$B_G$ takes $o(n)$, summing to $n+o(n)$ bits in total. 

Before describing how the queries are supported,
we review a few useful operations that can be supported in $\mathcal{O}(1)$ using Lemma~\ref{thm:succinct_bit_string}. 
Recall that a cell $x$ is referred to by its index in $S_G$. 
\begin{center}
\scalebox{.99}{
    \begin{tabular}{|l|l|}
    \hline 
        $\bm{{\on{bar}}}_G(x)$ & rank of the bar that $x$ belongs to, among all bars \\ 
        \hline  
        $\bm{{\on{first\_in\_bar}}}_G(x)$ & the bottommost cell in the bar that contains $x$  \\ 
        \hline 
        $\bm{{\on{height}}}_G(x)$ & the number of cells below $x$ in its bar \\ 
        \hline 
        $\bm{{\on{block}}}_G(x)$ & the index of the block that $x$ belongs, among all blocks \\ 
        \hline 
        $\bm{{\on{size}}}_G(i)$  & number of cells in bar of index $i$ \\ 
        \hline
    \end{tabular}
}
\end{center}
We can find the rank of the bar that $x$ belongs to, among all bars, as $\bm{{\on{bar}}}_G(x) = \bm{{\on{rank}}}_{S_G}(x)$. 
The bottommost cell in the bar that contains $x$ is then $\bm{{\on{first\_in\_bar}}}_G(x) = \bm{{\on{select}}}_{S_G}(\bm{{\on{bar}}}(x))$, and the \emph{height} of $x$, defined as the number of cells below $x$ in its bar, is $\bm{{\on{height}}}_G(x) = x - \bm{{\on{first\_in\_bar}}}_G(x)$.
The index of the block that $x$ belongs, among all blocks, is $\bm{{\on{block}}}_G(x) = \bm{{\on{rank}}}_{B_G}(x)$. 
The number of cells in a bar of index $i$ among all bars is  $\bm{{\on{size}}}_G(i) = \bm{{\on{select}}}_{S_G}(i+1) - \bm{{\on{select}}}_{S_G}(i)$, and if $\bm{{\on{select}}}_{S_G}(i+1)$ returns \texttt{null}, then $i$ is the last column, and $\bm{{\on{size}}}_G(i)  = n-\bm{{\on{select}}}_{S_G}(i)$.

\begin{itemize}
\item[] 
{\textbf{Neighborhood queries:}}
Neighborhood queries are answered only using $S_G$. 
Given a cell $x$, we describe how to answer neighborhood queries.
\begin{itemize}
    \item $\bm{{\on{above}}}_G(x)$: if $S_G[x+1] = 1$, then $x$ is the uppermost cell in its bar and return \texttt{null}; otherwise, return $x+1$. 
    \item $\bm{{\on{below}}}_G(x)$: if $S_G[x] =1$, then $x$ is the lowermost cell and there is no cell below, return \texttt{null}; otherwise, return $x-1$.
    \item $\bm{{\on{left}}}_G(x)$: if $\bm{{\on{bar}}}_G(x) = 0$, we return \texttt{null} since $x$ appears in the leftmost bar; otherwise, we compute the bottommost cell $c_\ell$ in the bar immediately on the left of $x$ as 
    $c_\ell \leftarrow \bm{{\on{select}}}_{S_G}(\bm{{\on{bar}}}_G(x)-1)$. 
    Then if $c_\ell + \bm{{\on{height}}}_G(x) < \bm{{\on{first\_in\_bar}}}_G(x)$, there is a cell of the same height as $x$ on the left of $x$, and we return $c_\ell+\bm{{\on{height}}}_G(x)$; otherwise, return \texttt{null}.
    \item $\bm{{\on{right}}}_G(x)$: set $c_r \leftarrow \bm{{\on{select}}}_{S_G}(\bm{{\on{bar}}}_G(x)+1)$ as the bottommost cell in the bar on the right of $x$. 
    If $c_r$ is \texttt{null}, then $x$ is in the rightmost bar, and we return $\texttt{null}$. 
    Otherwise, if $c_r + \bm{{\on{height}}}_G(x) \leq \bm{{\on{size}}}_G(\bm{{\on{bar}}}_G(x)+1)$, there is a cell of the same height as $x$ on the right of $x$, and we return  $c_r + \bm{{\on{height}}}_G(x)$; otherwise return \texttt{null}. 
\end{itemize}

\algoNeighbourhoodBar

\begin{figure}[t]
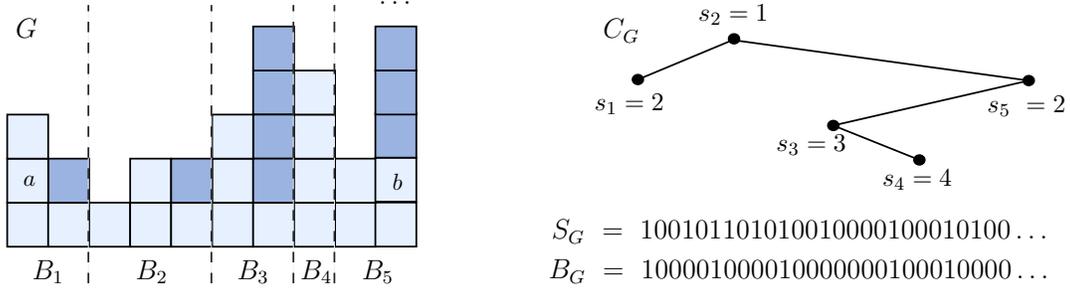

\centering
\scalebox{.88}{\barGraphLabelledColoured}
\caption{An example of a bar graph $G$ with $n = 29 + 2^{16}$ cells split into blocks, with $k = \lfloor \log \log n  \rfloor=4$. 
Only the first 29 cells in the canonical order are shown; the remaining $2^{16}$ belong to the last bar. 
Leftover cells are darker. }
\label{fig:bar_graph_full_picture}
\end{figure}
%
%
%
\vspace{2em}
\item[]
{\textbf{Visibility Queries:}}
%
Given two cells $x_1$ and $x_2$ from a bar graph $G$, where $x_1 < x_2$, we apply the following procedure to answer visibility queries, as illustrated in Algorithm~\ref{alg:is_vis_bar}: 
\begin{enumerate}
    \itemsep0em
    \item (Check if $x_1$ and $x_2$ belong to same bar.) If $\bm{{\on{bar}}}_G(x_1) = \bm{{\on{bar}}}_G(x_2)$, $x_1$ and $x_2$ belong to the same bar and is thus visible; return \texttt{true}.
    \item (Check if $x_1$ and $x_2$ have different heights.) If $\bm{{\on{height}}}_G(x_1) \neq \bm{{\on{height}}}_G(x_2)$, $x_1$ and $x_2$ have different heights and bars, and is not visible; return \texttt{false}.
    \item (Check if $x_1$ and $x_2$ are bottommost cells.) If $S_G[x_1] = S_G[x_2] = 1$, $x_1$ and $x_2$ are the bottommost cells in the bar; return \texttt{true}. 
    \item (The non-trivial case.) 
    In what follows, $x_1$ and $x_2$ belong to different bars. 
    Set $i \leftarrow \bm{{\on{block}}}_G(x_1)$ and $j \leftarrow \bm{{\on{block}}}_G(x_2)$.
    \vspace{-0.5em}
    \begin{enumerate}
        \item ($x_1$ and $x_2$ belong to the same block.) If $i=j$, $x_1$ and $x_2$ belong to the same block and we perform the following steps:
        \begin{enumerate}
        \itemsep0em
        \item Set $q \leftarrow \bm{{\on{select}}}_{B_G}(i)$.
        This is the first cell in $B_i$.
        \item Set $entry \leftarrow S_G[q, q + k]$. 
        These bits specify the entry associated with the bar graph $H$ of size $k$ in the lookup table $R_k$. Note that we can read the entry in $\mathcal{O}(1)$ under the word RAM model assumption.
        \item Evaluate if $x_2$ is a main cell via $q+k < x_2$.
        If \texttt{true}, set $y \leftarrow x_2$.
        Otherwise, when $x_2$ is a leftover cell, set $y = \bm{{\on{left}}}_G(x_2)$.
        If $y = \texttt{null}$, there is no cell on the left of $x_2$ so it cannot see $x_1$; return \texttt{null}.
        \item Now $x_1$ and $x_2$ are visible if $x_1$ and $y$ are visible. Note that $x_1$ and $y$ are the main vertices with indices $x_1-q$ and $y-q$ in $H$.
        We evaluate and return $S_G[entry].\bm{{\on{vis}}}(x_1-q,y-q) = 1$.
    \end{enumerate}
    
    \item ($x_1$ and $x_2$ belong to different blocks.) When $i \neq j$, $x_1$ and $x_2$ belong to different blocks and we perform the following steps:
    \begin{enumerate}
        \itemsep0em
        \item Set $h \leftarrow \bm{{\on{height}}}_G(x_1)$; that is, $h$ is the height of $x_1$ and $x_2$. 
        \item (Find the rightmost cell $w$ at height $h$ in $B_i$.) 
        Let $c_{last_i}$ be the last cell in canonical ordering that belongs to $B_i$. 
        Set $c_{last_i} \leftarrow \bm{{\on{select}}}_{B_G}(i+1)-1$.
        Next, we verify that the last bar of $B_i$ has height at least $h$. 
        That is, if $\bm{{\on{height}}}_G(c_{last_i}) \geq h$, set $w \leftarrow \bm{{\on{first\_in\_bar}}}_G(c_{last_i})+h - 1$.
        Otherwise, set $w \leftarrow \texttt{null}$.
        \item (Find the leftmost cell $z$ at height $h$ in $B_j$.) 
        Finding $z$ is done symmetrically to $w$.
        Let $c_{first_j}$ be the first cell in canonical ordering that belongs to $B_j$.
        Set $c_{first_j} \leftarrow \bm{{\on{select}}}_{B_G}(j)$.
        Again, we verify that the first bar of $B_j$ has height at least $h$.
        If $\bm{{\on{height}}}_G(c_{first_j}) \geq h$, set $z \leftarrow \bm{{\on{first\_in\_bar}}}_G(c_{first_j}) + h - 1$.
        Otherwise, set $z \leftarrow \texttt{null}$
        \item (Verify existence of $w$ and $z$.)
        If $w=\texttt{null}$ or $z=\texttt{null}$, there exists a bar between $x_1$ and $x_2$ with height less than $h$; return \texttt{false}. 
        \item (Check visibility within blocks $B_i$ and $B_j$.) 
        Given that $x_1$ and $w$ (respectively $x_2$ and $z$) belong to the same block, we can check their visibility by referring to the lookup table, as described in Step 4a. If either of these pairs are not visible, output \texttt{false}. 
        \item (Check visibility between blocks $B_i$ and $B_j$.)
        Let $\ell$ be the least common ancestor of nodes $i$ and $j$ in the Cartesian tree $C_G$.
        Set $\ell \leftarrow \bm{{\on{LCA}}}_{C_G}(i, j)$.
        $B_\ell$ will be the block in between $B_i$ and $B_j$ whose minimum height is the smallest among all blocks between $B_i$ and $B_j$. 
        Therefore, $w$ and $z$ are visible iff $h \geq lo$, where $lo$ is the smallest height of any bar in $B_\ell$. 
        To find $lo$, we read the entry associated with $B_\ell$ in the lookup table $R_k$ to determine the smallest bar height for all bars in $B_\ell$ except the last bar. 
        Set $h_1 \leftarrow R_k[B_\ell].height$.  
        We further find the height $h_2$ of the last bar in $B_\ell$, which includes the leftover vertices. 
        For that, find $c_{last_\ell} \leftarrow \bm{{\on{select}}}_{B_G}(\ell+1)-1$ as the last cell in $B_\ell$, and let $h_2 \leftarrow \bm{{\on{height}}}(c_{last_\ell})$.
        Set $lo \leftarrow \on{min}\{h_1, h_2\}$.
        If $lo \geq h$, return \texttt{true}. Otherwise, return \texttt{false}.
    \end{enumerate}
\end{enumerate}
\end{enumerate}
\end{itemize}

\end{proof}

\noindent\textbf{Example:} 
Assume we want to report the visibility of cells $a$ and $b$ in Figure~\ref{fig:bar_graph_full_picture}. 
We follow the steps in Algorithm~\ref{alg:is_vis_bar}.
Recall that cells in a bar graph $G$ are referred to by its index in $S_G$.
We first check if $a$ and $b$ belong to the same bar by evaluating $\bm{{\on{bar}}}_G(a) = \bm{{\on{bar}}}_G(b) \iff 1 \neq 10$, so cells $a$ and $b$ are not visible along a bar.
Next, we check if cells $a$ and $b$ have the same height, $\bm{{\on{height}}}_G(a) = \bm{{\on{height}}}_G(b) \iff 2 = 2$. 
Set $h \leftarrow \bm{{\on{height}}}_G(a)$.
Since the equality holds, we check that the bars between $a$ and $b$ have a height of at least $h$. 
%
$a$ belongs to block $i \leftarrow 1$ and $b$ belongs to block $j \leftarrow 5$.
We select the last cell in the canonical ordering that belongs to $B_i$, 
$c_{last_i} \leftarrow \bm{{\on{select}}}_{B_G}(i+1)-1 = \bm{{\on{select}}}_{B_G}(2)-1 = 5-1 = 4$. 
The height of the last cell is at least $h$, $\bm{{\on{height}}}_G(c_{last_i}) \geq h \iff 2 \geq 2$.
We find the rightmost cell at height $h$ in block $i$ denoted as $w 
\leftarrow \bm{{\on{first\_in\_bar}}}_G(c_{last_i})+h-1 
= 3+2-1 = 4$.
Similarly, we select the first cell in canonical ordering that belongs to $B_j$,
$c_{first_j} \leftarrow \bm{{\on{select}}}_{B_G}(j) = 22$.
The height of the first cell in $B_j$ is at least $h$, $\bm{{\on{height}}}_G(c_{first_j}) \geq h \iff 2 \geq 2$
We find the leftmost cell at height $h$ in block $j$ denoted as $z
\leftarrow \bm{{\on{first\_in\_bar}}}_G(c_{first_j})+h-1 
= 22+2-1 = 23$.
Next, we check that $(a, w)$ ($(z, b)$ respectively) are visible by checking the lookup table $R_k$. Recall that $k \leftarrow \lfloor \log\log n \rfloor$. 
$\bm{{\on{vis}}}(a, w)$ ($\bm{{\on{vis}}}(z, b)$ respectively) both return 1, meaning the pair is visible within their block.
We check that $w$ and $z$ are visible by finding the LCA in $C_G$, $\ell \leftarrow 2$.
From entry $B_2$ in $R_k$, the smallest height of a bar in block $B_2$ is $h_1 \leftarrow 1$.
The last cell in $B_2$ is $c_{last_\ell} \leftarrow \bm{{\on{select}}}_{b_G}(\ell+1)-1 = 10 -1 = 9$.
The height of the last bar in $B_2$ is denoted as $h_2 \leftarrow \bm{{\on{height}}}(c_{last_\ell}) = 2$.
Let $lo \leftarrow \text{min}\{h_1, h_2\} = \text{min}\{1, 2\} = 1$.
The smallest bar in $B_\ell$ is smaller than the height of $a$ and $b$,
$lo < h \iff 1 < 2$; thus $a$ and $b$ are not visible, we return \texttt{false}.

\hspace*{-1cm}

\section{Concluding Remarks}
In this paper, we presented a compact data structure for polyominoes and a succinct data structure for bar graphs, both of which support navigation, adjacency and visibility queries. From this work, a number of unanswered open questions remain. 

We define \emph{distance queries} to be the queries that report the cells that are a distance $d$ to either direction (up, down, left, and right) from a given cell. Can these be supported in constant time while storing polyominoes compactly? 

We say a polyomino $P$ is \emph{column-convex} (respectively \emph{row-convex}) if 
all cells within the same column (respectively rows) are visible to each other.
A polyomino $P$ is \emph{convex} if it is both column-convex and row-convex (see Figure~\ref{fig:convexPp}).
Does there exist a succinct data structure for column-convex (or row-convex) polyominoes? 
Observe that bar graphs are column-convex polyominoes. 

\convexpoly

\algoVisBar


\newpage
\bibliographystyle{IEEEbib}
\bibliography{refs.bib}


\end{document}